\lstdefinelanguage{pseudo}{morekeywords={init,with,or,if,then,else,fi,and,not,while,do,od,distinct,
    case, goto,local,algorithm, function, for, each, times, from, to,
    variables, procedure, recursive, return},
  morecomment=[l]{//}, morecomment=[s]{/*}{*/},
  mathescape=true,escapechar={@},
  basicstyle=\sffamily\small,
  commentstyle=\itshape\rmfamily\small,
  keywordstyle=\sffamily\bfseries\small
}
\definecolor{processblue}{cmyk}{0.96,0,0,0}
\newcommand{\dashact}[2][]{\ext@arrow 0359\rightarrowfill@@{#1}{#2}}
\def\rightarrowfill@@{\arrowfill@@\relax\relbar\dashrightarrow}
\def\arrowfill@@#1#2#3#4{%
	$\m@th\thickmuskip0mu\medmuskip\thickmuskip\thinmuskip\thickmuskip
	\relax#4#1
	\xleaders\hbox{$#4#2$}\hfill
	#3$%
}
\newcommand{\nn}{\mathbb{N}}
\newcommand{\zn}{\mathbb{Z}}
\newcommand{\be}{\begin{enumerate}}
\newcommand{\ee}{\end{enumerate}}
\newcommand{\bc}{\begin{center}}
\newcommand{\ec}{\end{center}}
\newcommand{\bi}{\begin{itemize}}
\newcommand{\ei}{\end{itemize}}
\newcommand{\act}{\xrightarrow}
\newcommand{\node}{\mathsf{n}}
\newcommand{\NP}{\textsf{NP}}
\newcommand{\coNP}{\textsf{coNP}}
\newcommand{\bx}{\mathbf{x}}
\newcommand{\by}{\mathbf{y}}
\newcommand{\TA}{\textsf{TA}}
\newcommand{\Env}{\text{Env}}
\newcommand{\local}{{\mathcal L}}
\newcommand{\initlocal}{{\mathcal I}}
\newcommand{\paraset}{\Pi}
\newcommand{\globset}{\Gamma}
\newcommand{\ruleset}{{\mathcal R}}
\newcommand{\ResCond}{{\mathit{RC}}} 
\newcommand{\syssize}{N}
\newcommand{\update}{\vec{u}}
\newcommand{\fromstate}{{\mathit{from}}}
\newcommand{\tostate}{{\mathit{to}}}
\newcommand{\precond}{\varphi} 
\newcommand{\counters}{{\vec{\boldsymbol\kappa}}}
\newcommand{\vars}{\vec{g}}
\newcommand{\param}{\mathbf{p}}
\newcommand{\statectx}{\omega}
\newcommand\slice[2]{#1{\raise-.5ex\hbox{\ensuremath|}}_{#2}}
\newcommand\ELTLFT{\textsf{ELTL}_\textsf{FT}}
\newcommand{\cpp}[1]{#1{\footnotesize{\texttt{++}}}}
\begin{document}
	\title{Coefficient Synthesis for Threshold Automata}
	
	\address{Max Planck Institute for Software Systems, Kaiserslautern, Germany.}
	
	\author{A. R. Balasubramanian
		\thanks{A part of the work was done when the author was at Technische Universit\"{a}t M\"{u}nchen, Munich, Germany}\\
		Max Planck Institute for Software Systems \\ 
		Kaiserslautern, Germany\\
		bayikudi@mpi-sws.org}
\maketitle              
\runninghead{A. R. Balasubramanian}{Coefficient Synthesis for Threshold Automata}
\begin{abstract}
	Threshold automata are a formalism for modeling fault-tolerant distributed algorithms.
	The main feature of threshold automata is the notion of a threshold guard, which
	allows us to compare the number of received messages with the 
	total number of different types of processes.
	In this paper, we consider the coefficient synthesis problem for threshold automata, in which 
	we are given a sketch of a threshold automaton (with some of the constants in the threshold guards left unspecified) and a violation describing a collection of undesirable behaviors. We then want to synthesize a set of constants which
	when plugged into the sketch, gives a threshold automaton that does not have the undesirable behaviors. 
	Our main result is that this problem is undecidable, even when the violation
	is given by a coverability property and the underlying sketch is acyclic. 
	
	We then consider the
	bounded coefficient synthesis problem, in which a bound on the constants to be synthesized is also provided.
	Though this problem is known to be in the second level of the polynomial hierarchy for coverability properties, the algorithm for this problem 
	involves an exponential-sized encoding of the reachability relation
	into existential Presburger arithmetic. In this paper, we give a polynomial-sized encoding for this relation. 
	We also provide a tight complexity lower bound for this problem against coverability properties. Finally, motivated by benchmarks appearing from the literature, we also consider a special class of threshold automata
	and prove that the complexity decreases in this case. 
\end{abstract}

\begin{keywords}
	Threshold automata, Coefficient synthesis, Presburger arithmetic with divisibility
\end{keywords}

\section{Introduction}
Threshold automata~\cite{KVW17:IandC} are a formalism for modeling and analyzing parameterized fault-tolerant distributed algorithms.
In this setup, an arbitrary but finite number of processes execute a given distributed protocol modeled as a threshold automaton. 
Verifying these systems amounts to proving that the given protocol is correct with respect to a given specification, irrespective
of the number of agents executing the protocol. Many algorithms have been developed for verifiying properties of threshold automata~\cite{AllFlavors,BKLW19,ELTLFT,KW18,KVW17:IandC,FSTTCS20} and it is known that the reachability problem for threshold automata is \NP-complete~\cite[Theorem 1 and Corollary 1]{Complexity}.

In many formalisms for modeling distributed systems (like rendez-vous protocols~\cite{GermanS92} and reconfigurable broadcast networks~\cite{AdHoc}), the status of a transition being enabled or not
depends only on a fixed number of processes, independent of the total number of participating processes. 
One of the central features that distinguishes threshold automata from such formalisms is the notion of a \emph{threshold guard}. 
A threshold guard can be used to specify relationships between the number of messages received and the total number of participating
processes, in order for a transition to be enabled. For example, if we let $x$ be a variable counting the number of messages of a
specified type, $n$ be the number of participating processes and $t$ be the maximum number of processes which can fail, 
then the guard $x \ge n/3 + t$ on a transition specifies that the number of messages received should be at least $n/3+t$, in order for 
a process to execute this transition. 

While the role of these guards is significant for the correctness of these protocols, they can also be unstable as small changes (and hence small calculation errors)
in the coefficients of these guards can make a correct protocol faulty. 
(A concrete example of this phenomenon will be illustrated in the next section). 
For this reason, it would be desirable to automate
the search for coefficients so that once the user gives a ``sketch'' of a threshold automaton (which only specifies the control flow but leaves out some of the arithmetic details)
and a violation property (such as the property of being able to put a process in some error state), we can compute a set of coefficient values, which when ``plugged into'' the sketch does not have any of the behaviors given by the violation property.
With this motivation, the authors of~\cite{LKWB17:opodis} 
tackle this \emph{coefficient synthesis problem} and provide theoretical and experimental results. They show
that for a class of ``sane'' threshold automata, this problem is decidable against a particular class of properties and provide a CEGIS approach for synthesizing these coefficients. However, the decidability status of the coefficient synthesis problem for the general case has remained open so far.

In this paper, we prove that this problem is actually undecidable, hence settling the decidability status of this 
problem. We do this by giving a reduction from 
a fragment of \emph{Presburger arithmetic with divisibility}, for which the
validity problem is known to be undecidable. Further, our result already shows that the coefficient synthesis
problem is undecidable, even when the violation property simply specifies that a given state can be populated by some process (a so-called coverability property)
and 
the underlying control-flow structure of the sketch automaton is acyclic.


We then consider the \emph{bounded coefficient synthesis} problem~\cite{Complexity}, where in addition to the sketch and the
property, the user also gives a bound (in the form of an interval) on the coefficients.
For violations specified by coverability properties, this problem was already known to be in $\Sigma_2^p$~\cite[Theorem 7]{Complexity}.
The main ingredient that was used to prove this upper bound was that the reachability relation
of a threshold automaton can be defined using an existential Presbuger formula.
However, the size of the formula given in~\cite{Complexity} can be exponential in the size of the input.
Our second main result is that we can efficiently construct a formula in existential Presburger arithmetic of polynomial size for the same task. Furthermore, we also provide a matching lower bound for bounded coefficient synthesis against coverability properties.
Finally, motivated by benchmarks appearing in the literature, we consider a special class of threshold automata and prove that bounded coefficient synthesis for this class against 
coverability properties is \coNP-complete.

\subsubsection*{Related work.} As mentioned before, the coefficient synthesis problem has already been studied in~\cite{LKWB17:opodis}. However, the decidability status of the general case was left open in that paper and
here we show it is undecidable. 
A similar problem has also been studied for \emph{parametric timed automata}~\cite{AlurHV93}, where the
control flow of a timed automaton is given as input and 
we have to synthesize coefficients for the guards in order to satisfy a given reachability specification.
The authors show that the problem is undecidable, already for timed automata with three clocks. They also show that it is
decidable when the automaton has only one clock. 
Unlike clocks, the shared variables in our setting cannot be reset. 
Further, in our setting, variables can be compared with both the coefficients and other environment variables, which is 
not the case with parametric timed automata.

\subsubsection*{Organization of the paper. } The rest of the paper is organized as follows. In Section~\ref{sec:prelims}, we recall the necessary definitions
needed to state the coefficient synthesis problem. In Section~\ref{sec:undec}, we define the fragment of Presburger arithmetic
with divisibility that we will be working with and we prove our main undecidability result by giving a reduction from this fragment. Then, in Section~\ref{sec:bound-synth}, we present all our results regarding the bounded coefficient synthesis problem and the reachability relation of a threshold automaton. We conclude in Section~\ref{sec:concl}.


\section{Preliminaries}\label{sec:prelims}
Let $\nn_{>0}$ be the set of positive integers and 
$\nn$ be the set of non-negative integers. 

\subsection{Threshold Automata}

We introduce threshold automata, mostly following the definitions and notations
used in~\cite{Complexity,FSTTCS20}. Along the way, we also 
illustrate the definitions on the example of Figure~\ref{fig:stexample} from~\cite{ELTLFT}, which is a model of the Byzantine agreement protocol of Figure \ref{fig:st}. 

\begin{figure}[t]
	\begin{minipage}{.53\textwidth}
		\input{st87-pseudo.tex}
		\caption{Pseudocode of a reliable broadcast protocol from~\cite{ST87:abc}
			for a correct process~$i$, where $n$ and $t$ denote the number of processes,
			and an upper bound on the number of faulty processes. 
			The protocol satisfies its specification 
			(if $\mathit{myval}_i=0$ for every correct process $i$, then
			no correct process sets its $\mathit{accept}$ variable to $\mathit{true}$) if $t < n/3$. }\label{fig:st} 
	\end{minipage}
	\begin{minipage}{.02\textwidth}
		\phantom{x}
	\end{minipage}
	\begin{minipage}{.45\textwidth}
		\centering 
		\tikzstyle{node}=[circle,draw=black,thick,minimum size=4.3mm,inner sep=0.75mm,font=\normalsize]
\tikzstyle{init}=[circle,draw=black!90,fill=green!10,thick,minimum size=4.3mm,inner sep=0.75mm,font=\normalsize]
\tikzstyle{final}=[circle,draw=black!90,fill=red!10,thick,minimum size=4.3mm,inner sep=0.75mm,font=\normalsize]
\tikzstyle{rule}=[->,thick]
\tikzstyle{post}=[->,thick,rounded corners,font=\normalsize]
\tikzstyle{pre}=[<-,thick]
\tikzstyle{cond}=[rounded
  corners,rectangle,minimum
  width=1cm,draw=black,fill=white,font=\normalsize]
\tikzstyle{asign}=[rectangle,minimum
  width=1cm,draw=black,fill=gray!5,font=\normalsize]

\tikzset{every loop/.style={min distance=5mm,in=140,out=113,looseness=2}}
\begin{tikzpicture}[>=latex, thick,scale=1.1, every node/.style={scale=1}]

\node[] at (0, 0.85) [init,label={[label distance=-0.5mm]180:\textcolor{blue}{$\ell_0$}}]         (0) {};
 \node[] at (0, -0.85) [init,node,label={[label distance=-0.5mm]180:\textcolor{blue}{$\ell_1$}}]   (1) {};

 \node[] at (2.75, 0) [node,label=below:\textcolor{blue}{$\ell_2$}]        (2) {};
 \node[] at (4.2, 0) [final,label=below:\textcolor{blue}{$\ell_3$}]    (3) {};

\draw[post] (0) to[]
    node[sloped, above, align= center,xshift=0cm]
    {\small $r_2 \colon \gamma_1 \mapsto \cpp{x}\quad$} (2);
\draw[post] (1) to[] node[sloped, above, align= center,yshift=0cm]
    {\small $ r_1 \colon true \mapsto \cpp{x}$~ ~}(2);
\draw[post] (2)to[]
    node[align=center,anchor=north, midway]
    {\small $r_3 \colon \gamma_2$} (3);

\end{tikzpicture}\caption{Threshold automaton from~\cite{ELTLFT} modeling the body of the loop in the 
			protocol from Fig.~\ref{fig:st}. Symbols $\gamma_1, \gamma_2$ stand for the
			threshold guards $x \ge (t+1) - f$ and $x \ge (n-t) - f$, 
			where $n$ and $t$ are as in Fig.~\ref{fig:st}, and $f$ is the actual
			number of faulty processes. The shared variable $x$ models the number of ECHO messages sent by 
			correct processes. Processes with $\mathit{myval}_i=b$ (line~1) start in location 
			$\ell_b$ (in green). Rules $r_1$ and $r_2$ model sending ECHO at lines 7
			and 12. 
		}\label{fig:stexample}
	\end{minipage}
\end{figure}

\subsubsection*{Environment. } An \emph{environment} is a tuple $\Env = (\paraset, \ResCond, \syssize)$, where $\paraset$ is a finite set of \emph{environment variables} ranging over $\nn$, $\ResCond \subseteq \nn^{\paraset}$ is a \emph{resilience condition}
over the environment variables, given as a linear formula, and $\syssize \colon \ResCond \rightarrow \nn$ is a linear function
called the \emph{number function}. Intuitively, an assignment of $\paraset$ determines the number of processes of different kinds 
(e.g.\ faulty) executing the protocol, and $\ResCond$ describes the admissible combinations of values of environment variables. 
Finally, $\syssize$ associates to a each admissible combination, the number of processes explicitly modeled. 
In a Byzantine setting, faulty processes behave arbitrarily, and so we do not model them explicitly;
In the crash fault model, processes behave correctly until they crash and they must be modeled explicitly.

\begin{example}
In the threshold automaton of Figure~\ref{fig:stexample}, the environment variables are $n$, $f$, and $t$, describing the number of processes, the number of (Byzantine) faulty processes, and the maximum possible number of faulty processes, respectively. The resilience condition is the constraint $n/3 > t \geq f$.
The function $\syssize$ is given by $\syssize(n,t,f)= n-f$, which is the number of correct processes. 
\end{example}

\subsubsection*{Threshold automata. } 
A \emph{threshold automaton} over an environment $\Env$ is a tuple $\TA=(\local, \initlocal, \globset, \ruleset)$, where
$\local$ is a finite set of  \emph{local states} (or \emph{locations}), $\initlocal\subseteq\local$ is a nonempty subset  of \emph{initial locations}, $\globset$ is a finite set of \emph{shared variables} ranging over $\nn$, and $\ruleset$ is a finite set of \emph{transition rules} (or just \emph{rules}), formally described below.

A \emph{transition rule} (or just a \emph{rule}) is a tuple $r = (\fromstate, \tostate,  \precond, \update)$, where $\fromstate, \tostate \in \local$ are the \emph{source} and \emph{target} locations respectively, $\precond
\subseteq \nn^{\paraset \cup \globset}$ 
is a conjunction of \emph{threshold guards} (described below), and $\update \colon \globset \rightarrow \{0,1\}$ is an \emph{update}. We often let $r.\fromstate, r.\tostate, r.\precond, r.\update$ denote the components of $r$.
Intuitively, $r$ states that a process can move from $\fromstate$ to $\tostate$ if the current values of $\paraset$ and $\globset$ satisfy $\varphi$, and when it moves, it updates the current valuation $\vars$ of $\globset$ by performing the update $\vars := \vars + \update$. Since all components of $\update$ are nonnegative, the values of shared variables never decrease.  A \emph{threshold guard} $\varphi$ has one of the following forms: $b \cdot x \bowtie a_0 + a_1 \cdot p_1 + \ldots + a_{|\paraset|} \cdot p_{|\paraset|}$
where~$\bowtie \ \in \{\ge,>,=,<,\le\}$, $x$ $\in \globset$ is a shared variable, 
$p_1,\ldots, p_{|\paraset|}\in \paraset$ are the environment variables,
$b \in \nn_{> 0}$ and $a_0,a_1,\ldots,a_{|\paraset|}\in \zn$ are integer coefficients.

The underlying graph of a threshold automaton is the graph obtained by taking the vertices as the locations 
and connecting any two vertices with an edge as long as there is a rule between them. A threshold automaton is 
called \emph{acyclic} if its underlying graph is acyclic. 
\begin{example}
The threshold automaton from Figure~\ref{fig:stexample} is acyclic.
The rule $r_3$ of this automaton has $\ell_2$ and $\ell_3$ as its source and target
locations, $x \geq (n-t)-f$ as its guard, and does not increment any shared variable. On the other hand,
the rule $r_1$ has $\ell_1$ and $\ell_2$ as its source and target locations, no guard (denoted by $true$) and increments the 
variable $x$. 
\end{example}

\subsubsection*{Configurations and transition relation. }
A \emph{configuration} of $\TA$ is a triple $\sigma=(\counters,\vars,\param)$ where 
$\counters  \colon \local \rightarrow \nn$ describes the number of processes at each location, and $\vars \in \nn^{\globset}$ and $\param  \in \ResCond$ are valuations of the shared variables and the environment variables respectively. In particular, $\sum_{\ell \in \local} \counters(\ell)= \syssize(\param)$ always holds.  A configuration is \emph{initial} if $\counters(\ell) =0$ for every $\ell \notin \initlocal$, and $\vars = \vec{0}$. We often let $\sigma.\counters, \sigma.\vars, \sigma.\param$ denote the components of $\sigma$. 

A configuration $\sigma=(\counters,\vars,\param)$  
\emph{enables} a rule $r = (\fromstate, \tostate,  \precond, \update)$ if $\counters(\fromstate) > 0$, and $(\vars, \param)$ satisfies the guard $\precond$, i.e., substituting $\vars(x)$ for $x$ and $\param(p_i)$ for $p_i$ in $\precond$ yields a true expression, denoted by $\sigma\models\varphi$. If $\sigma$ enables $r$, then there is a \emph{step} from $\sigma$ to the configuration $\sigma'=(\counters',\vars',\param')$ given by,
(i)  $\param' = \param$, (ii) $\vars' = \vars + \update$, and (iii) $\counters' = \counters + \vec{v}_r$, where $\vec{v}_r = \vec{0}$ if
$\fromstate = \tostate$ and otherwise,
$\vec{v}_r (\fromstate) =-1$, $\vec{v}_r (\tostate) =+1$, and $\vec{v}_r(\ell) = 0$ for all other locations $\ell$.
We let $\sigma \act{r} \sigma'$ denote that $\TA$ there is a step from $\sigma$ to $\sigma'$ using the rule $r$.
We use $\sigma \act{} \sigma'$ to denote that $\sigma \act{r} \sigma'$ for some rule $r$. 

A schedule is a finite sequence of rules. Given a schedule $\tau = r_1,r_2,\dots,r_k$ and two configurations $\sigma, \sigma'$ we say that $\sigma \act{\tau} \sigma'$ if 
there exist configurations $\sigma_0, \sigma_1, \dots, \sigma_k$ such that $\sigma_0 = \sigma$,
$\sigma_{i-1} \act{r_i} \sigma_{i}$ for all $0 < i \le k$ and $\sigma_k = \sigma'$. 
In this case, we will call the sequence $\sigma_0, \sigma_1, \dots, \sigma_k$ a \emph{run} or a \emph{path} between $\sigma$
and $\sigma'$.

We let $\sigma \act{*} \sigma'$ to mean that $\sigma \act{\tau} \sigma'$ for some schedule $\tau$.
If $\sigma \act{*} \sigma'$, we say 
that $\sigma'$ is reachable from $\sigma$.

\subsubsection*{Coverability. }\label{sec:spec}

Let $\ell \in \local$ be a location. We say that a configuration $\sigma$ covers 
$\ell$ if $\sigma(\ell) > 0$. We say that $\sigma$ can cover a location $\ell$ if $\sigma$
can reach a configuration $\sigma'$ such that $\sigma'$ covers $\ell$. Finally, we say that $\TA$ can cover
$\ell$ if some initial configuration of $\TA$ can cover $\ell$. Hence, $\TA$ \emph{cannot cover} $\ell$ 
if and only if every initial configuration of $\TA$ cannot cover $\ell$. It is known that deciding whether
a given threshold automaton can cover a given location is \NP-complete~\cite[Theorem 1 and Corollary 1]{Complexity}.

\subsubsection*{Coefficient synthesis. }\label{subsec:synthesis}

We now come to the definition of the main problem that we will be interested in this paper,
namely the coefficient synthesis problem~\cite{LKWB17:opodis}. 
To introduce this problem, we first have to introduce the notion of \emph{sketch threshold automata}, which we do now.


\paragraph*{Sketch threshold automata.} In a threshold automaton, a guard is an inequality which can be of the form $b \cdot x \bowtie a_0 + a_1 \cdot p_1 + \dots a_{|\paraset|} \cdot p_{|\paraset|}$ with $\bowtie \ \in \{\ge,>,=,<,\le\}, \ b \in \nn_{>0}$ and $a_1,\dots,a_{|\paraset|} \in \zn$. 
A \emph{sketch threshold automaton} (or simply a \emph{sketch}) is the same as a threshold automaton, except that some of the $b, a_0, a_1, \dots, a_{|\paraset|}$ terms in
any guard of the automaton are now allowed to be \emph{indeterminates}, 
which are variables that can take any integer values. Intuitively, a sketch threshold automaton completely specifies the
control flow of the protocol, but leaves out some of the precise arithmetic details of the threshold guards. 

Given a sketch $\TA$ and an integer assignment $\mu$ to the indeterminates, we let $\TA[\mu]$ denote the threshold automaton 
obtained by replacing the indeterminates with their corresponding values in $\mu$. The \emph{coefficient synthesis problem against coverability properties} for 
threshold automata is now defined as the following problem:
%
\begin{quote}
	\emph{Given: } An environment $\Env$, a sketch $\TA$ and a location $\ell$\\
	\emph{Decide: } Whether there is an assignment $\mu$ to the indeterminates such that $\TA[\mu]$ \textit{cannot} cover $\ell$.
\end{quote}
\begin{remark}\label{rem:cov}
	The coefficient synthesis problem as defined in~\cite{LKWB17:opodis} is a more general problem
	than the one that will be defined here. In that paper, along with an environment and a sketch, a violation is also given as input, where a violation is a collection of behaviors specified in a logic called $\ELTLFT$.
	The question then is to find an assignment to the indeterminates so that plugging in the assignment
	results in a threshold automaton that avoids the behaviors specified by the violation.
	Since coverability can be specified in that logic, it follows that our formulation is a special
	case of that formulation. 
	
	In this paper, for the sake of simplicity and presentation, we will only restrict ourselves to  coverability violations. Since the main 
	result of this paper is an undecidability result, this will also translate to the general case. In fact, all of our results, except the final one, all translate to the general case as well.
	A more detailed discussion regarding this point could also be found in the conclusion of this paper.
\end{remark}

\begin{example} 
	We consider the threshold automaton from Figure~\ref{fig:stexample}.
	As mentioned in the text under Figure~\ref{fig:st}, if no (correct) process initially starts at 
	$\ell_1$, then no process can ever reach $\ell_3$. This implies that if we remove the location $\ell_1$
	in the threshold automaton of Figure~\ref{fig:stexample}, then the modified threshold automaton $\TA'$
	will never be able to cover $\ell_3$.
	
	We can now convert $\TA'$ into a sketch, by replacing the
	guard $\gamma_1$ with $x \ge (t + a) - f$, where $a$ is an indeterminate. 
	When $a = 1$, we get $\TA'$ and so no reachable configuration has a process at the location $\ell_3$.
	However, when $a = 0$, this is not the case.
	Indeed if we set $n = 6$, $t = f = 1$ and if all the $N(6,1,1) = 6-1 = 5$ processes start at $\ell_0$ initially,
	then the guard $\gamma_1$ will always be true and so all the 5 processes can move to $\ell_2$, thereby setting the value
	of $x$ to 5. At this point, the guard $\gamma_2$ becomes true and so all the processes can move to $\ell_3$. This indicates that very small changes in the coefficients can make a protocol faulty.\\
\end{example}

Having stated all the necessary definitions, we now move on to the first result of this paper.

\section{Undecidability of Coefficient Synthesis}\label{sec:undec}
The first main result that we shall prove in this paper is that

\begin{theorem}~\label{th:main-th}
	The coefficient synthesis problem against coverability properties is undecidable, even for acyclic threshold automata.
\end{theorem}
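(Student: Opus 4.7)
The plan is to reduce from a known undecidable fragment of Presburger arithmetic with divisibility, as hinted in the introduction. The high-level reason this should work is a quantifier-structure match: the coefficient synthesis problem asks whether $\exists \mu$ (an assignment to the indeterminates) such that for all runs of $\TA[\mu]$, the target location $\ell$ is not reached---a $\Sigma_2$-style statement that aligns with the undecidability of alternating fragments of the theory of $(\nn, +, \mid)$. The outer existential witnesses will be exactly the indeterminates, and the inner universal part will be handled by quantifying over runs (equivalently, over choices of environment values and initial process distributions).

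The central gadget I would build encodes a divisibility predicate using threshold guards. Because a guard has shape $b \cdot x \bowtie a_0 + \sum a_i p_i$ with $b \in \nn_{>0}$, pairing a $\ge$-guard with a complementary $<$-guard (placed on parallel branches that must both be traversed in order to avoid $\ell$) can force $b \cdot x = a_0 + \sum a_i p_i$ for some nonnegative integer $x$, which is equivalent to $b \mid (a_0 + \sum a_i p_i)$. By letting $b$ or one of the $a_i$ be an indeterminate, I can arrange that synthesis is effectively choosing the divisor (or a term of the dividend) in a divisibility constraint.

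Given a formula $\phi \equiv \exists \bar{x}\, \forall \bar{y}\, \psi(\bar{x}, \bar{y})$ in the undecidable fragment, with $\psi$ a quantifier-free Boolean combination of linear (in)equalities and divisibilities, I would build an acyclic sketch $\TA_\phi$ whose indeterminates correspond bijectively to $\bar{x}$. The universally quantified tuple $\bar{y}$ would be simulated by the environment variables $\paraset$ (with $\ResCond$ enforcing domain constraints) and, where additional witnesses are required, by the freely chosen initial distribution of processes among the initial locations. Processes then traverse a sequence of gadget sub-DAGs corresponding to the literals of $\psi$, and are routed to the distinguished location $\ell$ iff they successfully certify $\neg \psi(\mu(\bar{x}), \bar{y})$. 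Thus $\TA_\phi[\mu]$ cannot cover $\ell$ iff $\forall \bar{y}\, \psi(\mu(\bar{x}), \bar{y})$ holds, and existence of a suitable $\mu$ is equivalent to the validity of $\phi$.

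The main obstacle I anticipate is enforcing faithful arithmetic---especially divisibility---without loops, since the underlying graph must be acyclic. A loop would ordinarily let one iterate a counter to test divisibility, but that is ruled out here. The remedy is to push all unboundedness into the environment parameters (which already range over $\nn$) and into the free initial distribution of process counts, so that a single acyclic sweep with appropriately chained guards suffices to witness one equation or divisibility assertion. The delicate technical work is the bookkeeping needed to ensure that the guessed counts and parameter values correspond exactly to the intended quantifier structure of $\phi$, and that the inequality-only guards $\ge, <$ can be combined to simulate the equalities underlying divisibility without introducing spurious reachability paths to $\ell$.
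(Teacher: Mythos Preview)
Your approach is essentially the paper's: reduce from the undecidable $\forall\exists$ fragment of Presburger arithmetic with divisibility, letting indeterminates play the role of the outer-quantified variables and environment variables (plus the freely chosen process count) play the role of the inner-quantified variables, with each atomic constraint realized by a small acyclic gadget whose exit is reachable iff the constraint holds; disjunction becomes parallel branches and conjunction becomes sequential chaining. Your $\exists\bar{x}\,\forall\bar{y}\,\psi$ framing is simply the complement of the paper's $\forall\bar{x}\,\exists\bar{y}\,\xi$, so the two reductions coincide.

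Two points deserve comment. First, your description of the divisibility gadget (``parallel branches that must both be traversed in order to avoid $\ell$'') is misleading: failing a guard does not route a process to $\ell$, it merely blocks that rule. What actually works---and what the paper does---is to place the conjunction $b\cdot x \ge c \land b\cdot x < c{+}1$ on the single rule \emph{leading toward} the target, so that the target is reachable iff the equality $b\cdot x = c$ is satisfiable for some reachable value of the shared variable. Your idea of putting the indeterminate in the $b$-slot to obtain $b\mid c$ directly is a legitimate variant; the paper instead keeps the left-hand multiplier equal to $1$ and introduces an extra environment variable $d_A$ as the quotient witness, checking $v_A = s_j\cdot d_A \land v_A = t_k$. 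Second, you do not address the mismatch between indeterminates ranging over $\zn$ and the PAD variables ranging over $\nn$; the paper handles this in a separate wrap-up step by first proving undecidability for \emph{non-negative} coefficient synthesis and then, for each indeterminate $s$, adding a direct rule from a fresh initial location to $\ell$ whose guard is satisfied precisely when $s$ is negative, so that any assignment with a negative component immediately makes $\ell$ coverable and is therefore ruled out.
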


\begin{remark}
	As mentioned in Remark~\ref{rem:cov}, in this paper we will only be concerned with violations
	specified by coverability properties. For this reason, in the sequel, we will refer to the coefficient
	synthesis problem against coverability properties as simply the coefficient synthesis problem.
\end{remark}


Theorem~\ref{th:main-th} is proved in two steps. First, we consider a restricted version
of the coefficient synthesis problem, called the \emph{non-negative coefficient synthesis problem}, in which given a tuple $(\Env,\TA,\ell)$, we want to find a \emph{non-negative assignment} $\mu$ to the
indeterminates so that the resulting automaton $\TA[\mu]$ does not cover $\ell$. 
We first show that the non-negative coefficient synthesis problem is undecidable. Then, we reduce non-negative coefficient synthesis to coefficient synthesis, thereby achieving the desired result.

To prove that the non-negative coefficient synthesis problem is undecidable, we give a reduction from the validity problem for a restricted fragment of \emph{Presburger arithmetic with divisibility}, which is known to be undecidable. We now proceed to formally define this fragment.

\subsection{Presburger Arithmetic with Divisibility}

We now recall the necessary definitions for introducing Presburger arithmetic with divisibility. We will mostly follow the notations given in~\cite{SynthesisOneCounter}.

\emph{Presburger arithmetic} (PA) is the first-order theory over $\langle \nn, 0, 1, +, < \rangle$ where $+$ and $<$ are the standard addition and order operations over the natural numbers $\nn$ with constants 0 and 1 interpreted in the usual way.
We can, in a straightforward manner, extend our syntax with the following abbreviations: $\le, = , \ge, >$ and $ax = \sum_{1 \le i \le a} x$ where $a \in \nn_{>0}$ and $x$ is a variable.
A \emph{linear polynomial} is an expression of the form $\sum_{1 \le i \le n} a_i x_i + b$ where each $x_i$ is a variable, each $a_i$ belongs to $\nn_{>0}$ and $b$ belongs to $\nn$. An \emph{atomic formula} is a formula of the form $p(\bx) \bowtie q(\bx)$
where $p$ and $q$ are linear polynomials over the variables $\bx$ and $\bowtie \ \in \{<,\le,=,>,\ge\}$. 

\emph{Presburger arithmetic with divisibility} (PAD) is the extension of PA obtained by adding a divisibility
predicate $|$ which is interpreted as the usual divisibility relation among numbers. For the purposes of this paper, we restrict ourselves to
the $\forall \exists_R \text{PAD}^{+}$ fragment of PAD, i.e., we shall only consider
statements of the form
\begin{equation}\label{eq:undec}
	\forall x_1,\dots,x_n \ \exists y_1,\dots,y_m \ \bigvee_{i \in I} \left(
	\bigwedge_{(j,k) \in S_i} (x_j | y_k) \land \bigwedge_{l \in B_i} A_l(x_1,\dots,x_n,y_1,\dots,y_m) \right)
\end{equation}
where $I, S_i, B_i$ are finite sets of indices and each $A_l$ is a quantifier-free atomic PA formula. 
It is known that checking if such a statement is true is undecidable~\cite{SynthesisOneCounter}.
We now prove the undecidability of the non-negative coefficient synthesis problem by a reduction from this problem.

\begin{remark}
	PAD as defined here allows us to quantify the variables only over the natural numbers, whereas in~\cite{SynthesisOneCounter} the 
	undecidability result is stated for the variant where the variables are allowed to take integer values. However, the same
	proof given in~\cite{SynthesisOneCounter} allows us to prove the undecidability result over the natural numbers as well.
\end{remark}

\begin{remark}
	In our definition of $\forall \exists_R \text{PAD}^+$, we only allow divisibility constraints of the form $x_j | y_k$. In~\cite{SynthesisOneCounter},
	divisiblity constraints of the form $f(\bx) | g(\bx,\by)$ were allowed, where $f$ and $g$ are any linear polynomials. 
	This does not pose a problem, because of the fact that $\forall x_1,\dots,x_n \ \exists y_1,\dots,y_m \  f(\bx) | g(\bx,\by)$ is true if and only if
	$\forall x_1,\dots,x_n,z \ \exists y_1,\dots,y_m,z' \ (z \neq f(\bx)) \lor (z = f(\bx) \ \land \ z' = g(\bx,\by) \ \land \ z | z')$. 
	Because of this identity, it is then clear that any formula in the $\forall \exists_R \text{PAD}^+$ fragment as defined in~\cite{SynthesisOneCounter}
	can be converted into a formula in our fragment without changing its validity.
\end{remark}

\subsection{The Reduction}

Let $\xi(x_1,\dots,x_n,y_1,\dots,y_m)$ be a formula of the form~\ref{eq:undec} with
$\bx$ denoting the collection $x_1,\dots,x_n$ and $\by$ denoting the collection $y_1,\dots,y_m$. 
The set of atomic formulas of $\xi$ is the set comprising
each quantifier-free atomic PA formula in $\xi$ and all the divisibility constraints of the form $x_j | y_k$ that appear in $\xi$.
The desired reduction now proceeds in two stages.

\paragraph*{First stage: The environment.} We begin by defining the environment  $\Env = (\paraset, \ResCond, \syssize)$. We will 
have $m$ environment variables $t_1,\dots,t_m$, with each $t_i$ intuitively corresponding to the variable $y_i$ in $\xi$. 
Further, for every atomic formula $A$ of $\xi$ which is a divisibility constraint, we will have an environment variable $d_A$.
Finally, we will have an environment variable $z$, which will intuitively denote the total number of participating processes.

The resilience condition $\ResCond$ will be the trivial condition $true$. The linear function $N : \ResCond \to \nn$
is taken to be $N(\Pi) = z$.  Hence, the total number of processes executing the threshold automaton will be $z$. 

\paragraph*{Second stage: The indeterminates and the sketch.} 
For each variable $x_i$ of $\xi$, we will have an indeterminate $s_i$. Before we proceed with the description of the sketch, we make a remark.

\begin{remark}
	Throughout the reduction, a \emph{simple} configuration of a sketch will mean
	a configuration $C$ such that 1) there is a \textbf{unique location} $\ell$ with $C(\ell) > 0$ and
	2) $C(v) = 0$ for every shared variable $v$, i.e., all the processes of $C$ are in exactly one location
	and the value of each shared variable is 0.
\end{remark}

We now proceed with the description of the sketch. 
Throughout the reduction, we let $\mathbf{s}$ denote the set of indeterminates $s_1,\dots,s_n$ and 
$\mathbf{t}$ denote the set of environment variables $t_1,\dots,t_m$ .
The sketch will now be constructed in three phases, which are as follows.

\subsubsection{First phase. } For each atomic formula $A$ of 
$\xi$, we will construct a sketch $\TA_A$.
$\TA_A$ will have a single shared variable $v_A$.
We now have two cases:

\begin{itemize}
	\item Suppose $A$ is of the form $x_j | y_k$ for some $j \in \{1,\dots,n\}$ and $k \in \{1,\dots,m\}$. Then, corresponding
	to $A$, we construct the sketch in Figure~\ref{fig:case-one}.
	\begin{figure}
		\begin{center}
			\tikzstyle{node}=[circle,draw=black,thick,minimum size=12mm,inner sep=0.75mm,font=\normalsize]
			\tikzstyle{edgelabelabove}=[sloped, above, align= center]
			\tikzstyle{edgelabelbelow}=[sloped, below, align= center]
			\begin{tikzpicture}[->,node distance = 2cm,scale=0.8, every node/.style={scale=0.8}]
				\node[node] (start) {$start_A$};
				\node[node, right = of start] (ell) {$\ell_A$};
				\node[node, right = of ell, xshift = 3cm] (end) {$end_A$};
				
				\draw(start) edge[edgelabelabove, bend left = 15] node{$\cpp{v_A}$} (ell);
				\draw(start) edge[edgelabelbelow, bend right = 15] node{} (ell);
				\draw(ell) edge[edgelabelabove] node{$v_A = s_j \cdot d_A \ \land \ v_A = t_k$} (end);
			\end{tikzpicture}
		\end{center}
		\caption{Sketch for the first case}
		\label{fig:case-one}
	\end{figure}
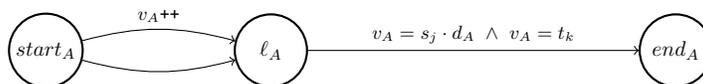 
	\item Suppose $A$ is of the form $f(\bx,\by) \bowtie g(\bx,\by)$ where $f$ and $g$ are linear polynomials and $\bowtie \ \in \{<,\le,=,>,\ge\}$.
	Then, corresponding to $A$, we construct the sketch in Figure~\ref{fig:case-two}.
	\begin{figure}
			\begin{center}
			\tikzstyle{node}=[circle,draw=black,thick,minimum size=12mm,inner sep=0.75mm,font=\normalsize]
			\tikzstyle{edgelabelabove}=[sloped, above, align= center]
			\tikzstyle{edgelabelbelow}=[sloped, below, align= center]
			\begin{tikzpicture}[->,node distance = 2cm,scale=0.8, every node/.style={scale=0.8}]
				\node[node] (start) {$start_A$};
				\node[node, right = of start] (ell) {$\ell_A$};
				\node[node, right = of ell, xshift = 3cm] (end) {$end_A$};
				
				\draw(start) edge[edgelabelabove, bend left = 15] node{$\cpp{v_A}$} (ell);
				\draw(start) edge[edgelabelbelow, bend right = 15] node{} (ell);
				\draw(ell) edge[edgelabelabove] node{$v_A = f(\mathbf{s},\mathbf{t}) \ \land \ v_A \bowtie g(\mathbf{s},\mathbf{t})$} (end);
			\end{tikzpicture}
		\end{center}
		\caption{Sketch for the second case}
		\label{fig:case-two}
	\end{figure}
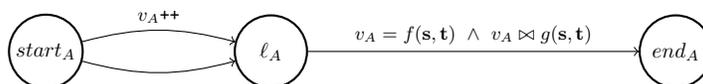	
\end{itemize}

\begin{remark}
	Notice that any assignment to the variables of $\mathbf{x}$ (resp. $\mathbf{y}$) can be interpreted in a straightforward manner
	as an assignment to $\mathbf{s}$ (resp. $\mathbf{t}$) and also vice versa. We will use this convention throughout the reduction.		
\end{remark}

Now let us give an intuitive idea behind the construction of these gadgets.
Intuitively, in both these cases, all the processes initially start at $start_A$. Then each process either
takes the top transition and increments $v_A$ or takes the bottom transition and does not increment any variable. Ultimately, this would lead to a point where all the processes are now at $\ell_A$. 
Then, in the first case, the guard from $\ell_A$ to $end_A$ essentially checks
that $s_j$ divides $t_k$ and in the second case, the guard from $\ell_A$ to $end_A$ essentially checks
that $f(\mathbf{s},\mathbf{t}) \bowtie g(\mathbf{s},\mathbf{t})$.
By the previous remark, the variables $\mathbf{s}$ (resp. $\mathbf{t}$) can be thought of as corresponding to the variables
$\mathbf{x}$ (resp. $\mathbf{y}$) and so this means that a process can reach $end_A$ if and only if $A$ can be
satisfied. We now proceed to formalize this intuition.

\begin{lemma}\label{prop:first-phase}
	Let $X$ and $Y$ be assignments to the variables $\mathbf{x}$ and $\mathbf{y}$ respectively. Then $A(X,Y)$ is true if and only if
	there is a 
	simple configuration $C$ of $\TA_A[X]$ with $C(start_A) > 0$ and $C(\mathbf{t}) = Y$ such that it can cover $end_A$.	
\end{lemma}

\begin{proof}
	($\Rightarrow$): Assume that $A(X,Y)$ is true. 
	\begin{itemize}
		\item Suppose $A$ is a divisibility constraint of the form $x_j | y_k$.
		Let $q$ be such that $X(x_j) \cdot q = Y(y_k)$ and let $C$ be the (unique) simple configuration
		given by $C(start_A) = C(z) = Y(y_k) + 1, C(t_k) = Y(y_k)$ and $C(d_A) = q$.
		\item Suppose $A$ is of the form $f(\bx,\by) \bowtie g(\bx,\by)$. 
		Let $C$ be the (unique) simple configuration
		given by $C(start_A) = C(z) = f(X,Y) + 1$ and $C(\mathbf{t}) = Y$.
	\end{itemize}
	
	The reason for having a ``$+1$'' in the 
	definition of $C(start_A)$ is so that we are guaranteed to have at least one process to begin with.
	
	From $C$, we proceed as follows: We move exactly one process from $start_A$ to $\ell_A$ by using the rule
	which increments nothing and we move all the other processes, one by one, from $start_A$ to $\ell_A$ by using the rule
	which increments $v_A$. This leads to a configuration $C'$ such that $C'(v_A) = C'(z) - 1 = C(z) - 1$.
	Because we assumed that $A(X,Y)$ is true, it follows that at $C'$, the outgoing rule from 
	$\ell_A$ is enabled. Hence, we can now move
	a process from $\ell_A$ into $end_A$, thereby covering $end_A$.
	
	($\Leftarrow$): Assume that $C$ is a simple configuration of $\TA_A[X]$ with $C(start_A) > 0$ and $C(\mathbf{t}) = Y$ such that from $C$
	it is possible to cover $end_A$. Let $\rho$ be a run from $C$ which covers $end_A$.
	By construction of $\TA_A$, it must mean that the outgoing rule from $\ell_A$ is fired at some point 
	along the run and so its guard must be enabled at some configuration $C'$ along the run.
	Note that $C'(\mathbf{t}) = C(\mathbf{t})$, since the environment variables never change their value along a run.
	
	Now, suppose $A$ is of the form $x_j | y_k$. This means that we have $X(s_j) \cdot C'(d_A) = C'(t_k)$.
	Since $X(s_j) = X(x_j), C'(t_k) = C(t_k) = Y(t_k)$, this implies that $X(x_j)$ divides $Y(t_k)$ and so $A(X,Y)$ is true.
	On the other hand, suppose $A$ is of the form $f(\bx,\by) \bowtie g(\bx,\by)$.
	Since $C'(\mathbf{t}) = C(\mathbf{t}) = Y$, this implies
	that $f(X,Y) \bowtie g(X,Y)$ and so $A(X,Y)$ is true.	
\end{proof}

\subsubsection{Second phase.} 
Let $\{\xi_i\}_{i \in I}$ be the set of subformulas of $\xi$ such that $\xi = \forall \bx \ \exists \by \bigvee_{i \in I} \xi_i$, i.e., the subformula $\xi_i$ is the disjunct corresponding to the index $i$ in the formula $\xi$. 
Let $A_i^1,\dots,A_i^{l_i}$ be the set of atomic formulas appearing in $\xi_i$.
We construct a sketch threshold automaton $\TA_{\xi_i}$ in the following manner: We take the sketches
$\TA_{A_i^1},\dots,\TA_{A_i^{l_i}}$ from the first phase and then for every $1 \le j \le l_i-1$,
we add a rule which connects $end_{A_i^j}$ to $start_{A_i^{j+1}}$, which neither increments any shared variable nor has any threshold
guards.
This is illustrated in Figure~\ref{fig:and-gadget} for the case of $l_i = 3$.

\begin{figure}
	\begin{center}
		\tikzstyle{node}=[circle,draw=black,thick,minimum size=7mm,inner sep=0.75mm,font=\normalsize]
		\tikzstyle{edgelabelabove}=[sloped, above, align= center]
		\tikzstyle{edgelabelbelow}=[sloped, below, align= center]
		\begin{tikzpicture}[->,node distance = 0.2cm,scale=0.8, every node/.style={scale=0.8}]
			\node[node] (start1) {{\small $start_{A_i^1}$}};
			\node[node, draw = none, right = of start1] (dots1) {$\dots \dots$};
			\node[node, right = of dots1] (end1) {{\small $end_{A_i^1}$}};
			\node[node, draw = none, below = of start1, xshift = 1.5cm, yshift = -0.2cm] (A1) {{\Large $A_{i}^1$}};
			\draw[dashed] (start1) ++(-0.9,1) rectangle (3.8,-1);
			
			\node[node, right = of end1, xshift = 1cm] (start2) {{\small $start_{A_i^2}$}};
			\node[node, draw = none, right = of start2] (dots2) {$\dots \dots$};
			\node[node, right = of dots2] (end2) {{\small $end_{A_i^2}$}};
			\node[node, draw = none, below = of start2, xshift = 1.5cm, yshift = -0.2cm] (A2) {{\Large $A_{i}^2$}};
			\draw[dashed] (start2) ++(-0.9,1) rectangle (9.45,-1);
			
			\node[node, right = of end2, xshift = 1cm] (startl) {{\small $start_{A_i^{3}}$}};
			\node[node, draw = none, right = of startl] (dotsl) {$\dots \dots$};
			\node[node, right = of dotsl] (endl) {{\small $end_{A_i^{3}}$}};
			\node[node, draw = none, below = of startl, xshift = 1.5cm, yshift = -0.2cm] (Al) {{\Large $A_{i}^3$}};
			\draw[dashed] (startl) ++(-0.9,1) rectangle (15.1,-1);

			\draw(end1) edge (start2);
			\draw(end2) edge (startl);
		\end{tikzpicture}
	\end{center}
\caption{Example sketch for the second phase}
\label{fig:and-gadget}
\end{figure}
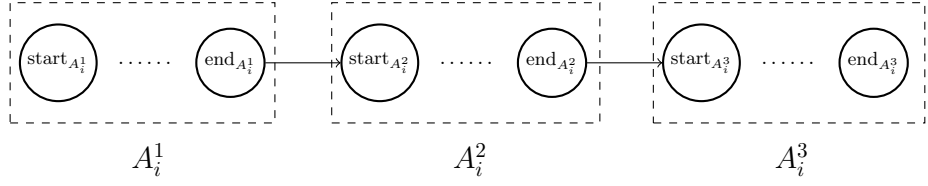


To prove a connection between the constructed gadget and the formula $\xi_i$, we first need to state
a property of the gadget. We begin with a definition.

\begin{definition}
	Let $C, C'$ be two configurations of $\TA_{\xi_i}[X]$ for some assignment $X$ and 
	let $A \in \{A^1_i,\dots,A^{l_i}_i\}$.
	We say that $C \preceq_A C'$ if $C(v) = C'(v)$ for $v \in \{v_A,d_A,\mathbf{t}\}$ 
	and $C(v) \le C'(v)$ for $v \in \{start_A,\ell_A,end_A,z\}$.
\end{definition}

By construction of $\TA_{\xi_i}$, the following \emph{monotonicity property} is clear.

\begin{proposition}[Monotonicity]
	Let $X$ be an assignment to the indeterminates and let $C \act{r} C'$ be a step in $\TA_{\xi_i}[X]$
	such that the rule $r$ belongs to $\TA_{A_i^{j}}$ for some $j$.
	Then for every $D$ such that $C \preceq_{A^j_i} D$,
	there exists a $D'$ such that $D \act{r} D'$ is a step in $\TA_{\xi_i}[X]$  and $C' \preceq_{A^j_i} D'$.
\end{proposition}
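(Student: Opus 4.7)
The plan is to proceed by case analysis on the rule $r$. Since $r$ belongs to $\TA_{A_i^j}$, inspection of Figures~\ref{fig:case-one} and~\ref{fig:case-two} shows there are only three kinds of rules to consider: the two parallel rules from $start_{A_i^j}$ to $\ell_{A_i^j}$ (one of which increments $v_{A_i^j}$, and the other is trivial), and the exit rule from $\ell_{A_i^j}$ to $end_{A_i^j}$ whose guard involves only $v_{A_i^j}$, the indeterminates $\mathbf{s}$ (fixed by the assignment $X$), the environment variable $d_{A_i^j}$ and the environment variables $\mathbf{t}$.

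In each case I first check that $r$ is enabled at $D$. The source of $r$ lies in $\{start_{A_i^j}, \ell_{A_i^j}\}$, and since $C \act{r} C'$ we have $C(source) > 0$; the definition of $\preceq_{A^j_i}$ then yields $D(source) \geq C(source) > 0$. The only nontrivial guard is the one on the exit rule, and all of its atoms depend solely on $v_{A_i^j}$, $d_{A_i^j}$, and $\mathbf{t}$, on which $C$ and $D$ agree exactly by $\preceq_{A^j_i}$. Hence this guard is satisfied at $D$ precisely when it is at $C$, so in every case $D \act{r} D'$ for some unique $D'$.

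It then remains to check $C' \preceq_{A^j_i} D'$. Each of the three rules contributes the same constant increment/decrement to both $C$ and $D$ on the counters $start_{A_i^j}, \ell_{A_i^j}, end_{A_i^j}$, and (only for the top rule) the same $+1$ increment to $v_{A_i^j}$; the variables $d_{A_i^j}$, $\mathbf{t}$, and $z$ are unchanged by any rule. Consequently, the equalities required by $\preceq_{A^j_i}$ on $\{v_{A_i^j}, d_{A_i^j}\} \cup \mathbf{t}$ survive (adding the same quantity to both sides of an equality), and the inequalities on $\{start_{A_i^j}, \ell_{A_i^j}, end_{A_i^j}, z\}$ survive (adding the same quantity to both sides of a $\leq$).

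I do not expect any genuine obstacle. The statement is essentially a bookkeeping observation: the semantics of the gadget $\TA_{A_i^j}$ depend only on the ``local'' variables that $\preceq_{A^j_i}$ forces to coincide, and counter updates are context-free, so adding extra tokens at other locations (or raising $z$) cannot disable a rule of $\TA_{A_i^j}$ or spoil the relation after the step.
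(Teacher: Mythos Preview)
Your proposal is correct and is precisely the detailed verification that the paper elides: the paper gives no proof beyond ``By construction of $\TA_{\xi_i}$, the following monotonicity property is clear,'' and your case analysis on the three rule types, together with the observation that guards depend only on variables forced equal by $\preceq_{A_i^j}$ while counter updates are identical on both sides, is exactly what makes it clear.
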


We now have the following proof which asserts the correctness of our construction.

\begin{lemma}\label{prop:second-phase}
	Let $X$ and $Y$ be assignments to the variables $\mathbf{x}$ and $\mathbf{y}$ respectively. Then $\xi_i(X,Y)$ is true if and only if there is a 
	simple configuration $C$ of $\TA_{\xi_i}[X]$ with $C(start_{A_{i}^1}) > 0$ and $C(\mathbf{t}) = Y$ such that it can cover $end_{A_{i}^{l_i}}$
\end{lemma}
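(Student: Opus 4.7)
The plan is to prove both directions by extending the gadget analysis of Lemma \ref{prop:first-phase} to the sequential composition $\TA_{\xi_i}$. Since $\xi_i = \bigwedge_{j=1}^{l_i} A_i^j$ is a conjunction of atomic formulas, the forward direction chains together the witness runs obtained for each conjunct, while the backward direction observes that any covering run of $end_{A_i^{l_i}}$ must, by the way the gadgets are glued end-to-start, trigger the guarded rule of every gadget in turn.

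For ($\Rightarrow$), assume every $A_i^j(X,Y)$ holds and let $C_j$ be the simple configurations of $\TA_{A_i^j}[X]$ provided by Lemma \ref{prop:first-phase}. Set $N := \max_j C_j(z)$ and define the simple configuration $C$ of $\TA_{\xi_i}[X]$ by $C(start_{A_i^1}) = C(z) = N$, $C(\mathbf{t}) = Y$, and $C(d_{A_i^j}) = C_j(d_{A_i^j})$ for each divisibility conjunct. I would construct the covering run in $l_i$ phases. Entering phase $j$ with all $N$ processes at $start_{A_i^j}$ and $v_{A_i^j} = 0$, fire the incrementing $start_{A_i^j} \to \ell_{A_i^j}$ edge exactly $n_j := C_j(z) - 1$ times and the non-incrementing edge $N - n_j$ times (both possible since $N \geq C_j(z)$). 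This leaves all $N$ processes at $\ell_{A_i^j}$ with $v_{A_i^j} = n_j$, which is exactly the value needed to enable the outgoing guard, by the same calculation as in the proof of Lemma \ref{prop:first-phase} together with $A_i^j(X,Y)$. Fire the guarded rule $N$ times and then the connecting rule $N$ times to move everyone to $start_{A_i^{j+1}}$. After phase $l_i$ the $N \geq 1$ processes all rest on $end_{A_i^{l_i}}$.

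For ($\Leftarrow$), let $\rho$ be a covering run from $C$. I would argue by backward induction on $j$, from $l_i$ down to $1$, that the guarded rule $\ell_{A_i^j} \to end_{A_i^j}$ fires at some step of $\rho$: the base case is immediate since this is the only rule producing a token at $end_{A_i^{l_i}}$, and for $j < l_i$ the induction hypothesis forces a token to appear at $\ell_{A_i^{j+1}}$, which by tracing back through the unique connecting rule from $end_{A_i^j}$ and the only incoming edges of $start_{A_i^{j+1}}$ and $\ell_{A_i^{j+1}}$ forces the firing at level $j$ as well. At each such firing the corresponding atomic formula is witnessed: since the environment valuation is invariant along $\rho$, for a divisibility conjunct $x_k \mid y_l$ the guard $v_{A_i^j} = s_k \cdot d_{A_i^j} \wedge v_{A_i^j} = t_l$ forces $X(x_k) \mid Y(y_l)$, while for an atomic conjunct $f \bowtie g$ the guard $v_{A_i^j} = f(\mathbf{s}, \mathbf{t}) \wedge v_{A_i^j} \bowtie g(\mathbf{s}, \mathbf{t})$ forces $f(X,Y) \bowtie g(X,Y)$. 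Combining the conjuncts gives $\xi_i(X,Y)$.

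The main obstacle is orchestrating the forward simulation so that each $v_{A_i^j}$ hits the exact value demanded by its guard without exhausting the pool of processes before the later gadgets can be run. This is why the two parallel edges out of each $start_{A_i^j}$ are crucial: together they can place any value in $\{0, 1, \ldots, N\}$ into $v_{A_i^j}$ starting from $N$ processes. Because each gadget uses its own shared variable $v_{A_i^j}$, values set during earlier phases cannot spuriously enable or block later guards, and the single global choice $N = \max_j C_j(z)$ is large enough for every phase.
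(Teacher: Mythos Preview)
Your proposal is correct and follows essentially the same approach as the paper's proof: take $N=\max_j C_j(z)$, fix the divisibility witnesses $d_{A_i^j}$ in the environment, and then chain the gadgets so that in each phase the shared variable $v_{A_i^j}$ is set to the value $C_j(z)-1$ that enables the guarded rule; for the converse, observe that any run reaching $end_{A_i^{l_i}}$ must fire every guarded rule $\ell_{A_i^j}\to end_{A_i^j}$ and hence witness each conjunct. The only cosmetic difference is that the paper isolates a \emph{monotonicity} proposition (closure of steps under $\preceq_{A_i^j}$) and lifts the runs $C_j\act{*}C_j'$ to runs from the larger configurations $D_j$, whereas you inline this by explicitly counting how many processes take the incrementing versus non-incrementing edge; both arguments rely on the same two facts you single out in your last paragraph, namely that the parallel edges let one hit any value in $\{0,\dots,N\}$ and that the gadgets use disjoint shared variables.
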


\begin{proof}
	($\Rightarrow$): Suppose $\xi_i(X,Y)$ is true. Since $\xi_i = \bigwedge_{1 \le j \le l_i} A^j_i$, this means that $A^j_i(X,Y)$ is true for every $1 \le j \le l_i$. 
	By Lemma~\ref{prop:first-phase}, for every $1 \le j \le l_i$, there exists a simple configuration $C_j$
	of $\TA_{A^j_i}[X]$ with $C_j(start_{A^j_{i}}) > 0$ and $C_j(\mathbf{t}) = Y$ such that $C_j$ can cover $end_{A^j_i}$.
	For each $j$, let $C_j \act{*} C_j'' \act{r_j} C_j'$ be a shortest run from $C_j$ which covers
	$end_{A^j_i}$. By definition $C_j''(end_{A^j_i}) = 0$ and $C_j'(end_{A^j_i}) > 0$. This means
	that the (unique) outgoing rule from $\ell_{A^j_i}$ is enabled at $C_j''$ and $r_j$ is in fact, this rule.
	This also implies that the only difference between $C_j''$ and $C_j'$ is that a process has moved
	from $\ell_{A^j_i}$ to $end_{A^j_i}$. In particular, the shared variables and the environment variables
	do not change their values during this step and so the guards along the rule $r_j$ are true
	at $C_j'$ as well.
	
	Let $Z = \max\{C_j(z) : 1 \le j \le l_i\}$. Let $D_1$ be the configuration given by
	$D_1(\mathbf{t}) = Y, D_1(d_{A_i^j}) = C_j(d_{A_i^j})$ for every $A \in \{A^1_i,\dots,A^{l_i}_i\}$ which is a divisibility
	constraint, $D_1(z) = D_1(start_{A^1_i}) = Z$ and $D_1(v) = 0$ for every other $v$.
	Note that $C_1 \preceq_{A^1_i} D_1$.

	We will now show the following by induction: For any $1 \le j \le l_i$, there is a configuration $D_j$
	which is reachable from $D_1$ such that $C_j \preceq_{A^j_i} D_j, D_j(start_{A^j_i}) = Z$ and $ D_j(v_{A^k_i}) = 0$ for every $k \ge j$. 
	The base case of $j = 1$ is trivial.
	Assume that we have already shown it for some $j$ and we now want to prove it for $j+1$.
	By existence of the run $C_j \act{*} C_j'$ and because of the monotonicity property,
	there is a run $D_j \act{*} D_j'$ such that $C_j' \preceq_{A^j_i} D_j'$. 
	Since the guards of the outgoing rule from $\ell_{A^j_i}$ are enabled at $C_j'$, it follows that they are also enabled at $D_j'$.
	We now do the following: From $D_j'$, we first move all the processes at $start_{A^j_i}$ to $\ell_{A^j_i}$
	by means of the rule which increments nothing. From there we move all the processes at $\ell_{A^j_i}$
	to $end_{A^j_i}$ and then to $start_{A^{j+1}_i}$. This results in a configuration $D_{j+1}$ which satisfies the claim.
	
	By induction, this means that we can reach $D_{l_i}$ from $D_1$. By the monotonicity property,
	we can cover $end_{A_i^{l_i}}$ from $D_{l_i}$. 
	
	($\Leftarrow$): Suppose there is a 
	simple configuration $C$ of $\TA_{\xi_i}[X]$ with $C(start_{A_{i}^1}) > 0$ and $C(\mathbf{t}) = Y$ such that it can cover $end_{A_{i}^{l_i}}$. Let $C \act{*} C'$ be such a run. By construction of $\TA_{\xi_i}$, this implies
	that there must be configurations $C_1, \dots, C_{l_i}$ along this run such that at each $C_j$,
	the outgoing rule from $\ell_{A^j_i}$ must be enabled. Hence, this means that 
	if $A^j_i$ is a formula of the form $x_k |  y_{k'}$, then 
	$X(s_k) \cdot C_j(d_{A^j_i}) = C_j(t_{k'})$ and if $A^j_i$ is a formula of the form $f_j(\mathbf{x},\mathbf{y}) \bowtie g_j(\bx,\by)$, then $f_j(X(\mathbf{s}),C_j(\mathbf{t})) \bowtie g_j(X(\mathbf{s}),C_j(\mathbf{t}))$.
	Since environment variables do not change their values along a run, this implies that
	in the former case, $X(x_k) | Y(y_{k'})$ and in the latter case, $f_j(X,Y) \bowtie g_j(X,Y)$.
	Hence, $A^j_i(X,Y)$ is true for every $j$ and so $\xi_i$ is true. 
\end{proof}

\subsubsection{Third phase.} The final sketch threshold
automaton $\TA$ is constructed as follows: $\TA$ will have a copy of each of the $\TA_{\xi_i}$ and in addition
it will also have two new locations $start$ and $end$.
Then, for each index $i \in I$, $\TA$ will have two rules, one of which goes from $start$ to $start_{A_i^1}$ and the other from $end_{A_i^{l_i}}$ to $end$. Both of these rules do not increment any
variable and do not have any guards. Intuitively, these two rules correspond to choosing 
the disjunct $\xi_i$ from the formula $\xi$.
This is illustrated in Figure~\ref{fig:or-gadget} for the case when the index set $I = \{i,j,k\}$.

\begin{figure}
	\begin{center}
		\tikzstyle{node}=[circle,draw=black,thick,minimum size=7mm,inner sep=0.75mm,font=\normalsize]
		\tikzstyle{edgelabelabove}=[sloped, above, align= center]
		\tikzstyle{edgelabelbelow}=[sloped, below, align= center]
		\begin{tikzpicture}[->,node distance = 0.2cm,scale=0.8, every node/.style={scale=0.8}]
			\node[node] (start) {{$start$}};
			
			\node[node, above right = of start, xshift = 4cm, yshift = 2cm] (start1) {{\small $start_{A_i^1}$}};
			\node[node, draw = none, right = of start1] (dots1) {$\dots \dots$};
			\node[node, right = of dots1] (end1) {{\small $end_{A_i^{l_i}}$}};
			\node[node, draw = none, below = of start1, xshift = 1.5cm] (A1) {{\Large $\TA_{\xi_i}$}};
			\draw[dashed] (start1) ++(-0.9,-1) rectangle (9.05,4);
			
			\node[node, right = of start, xshift = 3.6cm] (start2) {{\small $start_{A_j^1}$}};
			\node[node, draw = none, right = of start2] (dots2) {$\dots \dots$};
			\node[node, right = of dots2] (end2) {{\small $end_{A_j^{l_j}}$}};
			\node[node, draw = none, below = of start2, xshift = 1.5cm] (A2) {{\Large $\TA_{\xi_j}$}};
			\draw[dashed] (start2) ++(-0.9,-1) rectangle (9.05,1);
			
			\node[node, below right = of start, xshift = 4cm, yshift = -2cm] (startl) {{\small $start_{A_k^{1}}$}};
			\node[node, draw = none, right = of startl] (dotsl) {$\dots \dots$};
			\node[node, right = of dotsl] (endl) {{\small $end_{A_k^{l_k}}$}};
			\node[node, draw = none, below = of startl, xshift = 1.5cm] (Al) {{\Large $\TA_{\xi_k}$}};
			\draw[dashed] (startl) ++(-0.9,1) rectangle (9.05,-4);
			
			\node[node, right = of end2, xshift = 3.6cm] (end) {$end$};
			
			\draw(start) edge (start1);
			\draw(start) edge (start2);
			\draw(start) edge (startl);		
			
			\draw(end1) edge (end);
			\draw(end2) edge (end);
			\draw(endl) edge (end);		
		\end{tikzpicture}
	\end{center}
\caption{Example sketch for the third phase}
\label{fig:or-gadget}
\end{figure}
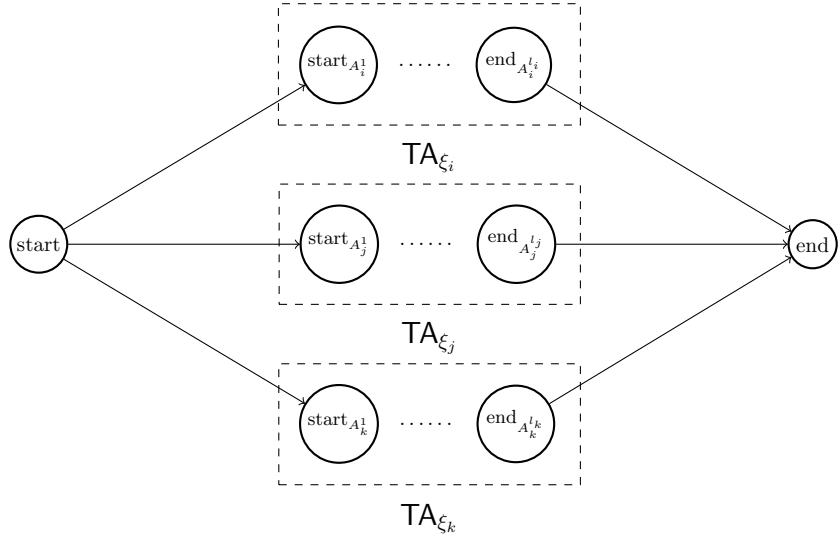


Setting the initial set of locations of $\TA$ to be $\{start\}$, we have the following lemma.

\begin{lemma}\label{prop:third-phase}
	Let $X$ and $Y$ be assignments to the variables $\mathbf{x}$ and $\mathbf{y}$ respectively. Then $\xi(X,Y)$ is true if and only if
	some initial configuration $C$ with $C(\mathbf{t}) = Y$ can cover the location $end$ in $\TA[X]$.	
\end{lemma}

\begin{proof}
	($\Rightarrow$): Suppose $\xi(X,Y)$ is true. Then $\xi_i(X,Y)$ is true for some $i$. By Lemma~\ref{prop:second-phase},
	there exists a simple configuration $C$ of $\TA_{\xi_i}[X]$ with $C(start_{A^1_i}) > 0$ and 
	$C(\mathbf{t}) = Y$ which can cover $end_{A^{l_i}_i}$. Consider the initial configuration $D$ in $\TA$
	which is the same as $C$ except that $D(start_{A^1_i}) = 0$ and $D(start) = C(start_{A^1_i})$.
	By construction of $\TA[X]$, we can make $D$ reach $C$. Since we can cover $end_{A^{l_i}_i}$ from $C$ in $\TA_{\xi_i}[X]$,
	we can also cover it in $\TA[X]$. Once we can cover $end_{A^{l_i}_i}$, we can also cover $end$.
	
	($\Leftarrow$): Suppose there is some initial configuration $C$ with $C(\mathbf{t}) = Y$ which can cover the location $end$ in $\TA[X]$. Let $C \act{*} C'$ be such a run covering $end$. By construction, there must be an index $i \in I$ and 
	configurations $C_1,C_2,\dots,C_{l_i}$ along this run such that at each $C_j$,
	the outgoing rule from $\ell_{A^j_i}$ is enabled.
	Similar to the argument from Lemma~\ref{prop:second-phase}, we can then show that each $A^j_i(X,Y)$ is true
	and so $\xi_i(X,Y)$ is true, which implies that $\xi(X,Y)$ is true. 
\end{proof}

It then follows that $\forall \bx \ \exists \by \ \xi(\bx,\by)$ is true
if and only if for every assignment $X$ of the indeterminates of $\TA$, there exists an initial configuration $C$ such 
that $C$ can cover $end$ in $\TA[X]$. Hence, the formula $\forall \bx \ \exists \by \ \xi(\bx,\by)$ is false
if and only if there exists an assignment $X$ of the indeterminates of $\TA$ such that $\TA[X]$ does not cover $end$. Since $\TA$ is acyclic, we then get the 
following theorem.
\begin{theorem}
	The non-negative coefficient synthesis problem for threshold automata is undecidable, even for acyclic threshold automata.
\end{theorem}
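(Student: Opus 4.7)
The plan is to treat this final theorem as the capstone of the reduction: essentially everything has been prepared, so the remaining work is to (i) convert the ``per-instance'' equivalence of the preceding proposition into a ``per-sketch'' statement matching the shape of the coefficient synthesis problem, (ii) verify acyclicity of the constructed sketch $\TA$, and (iii) conclude by appealing to the known undecidability of $\forall \exists_R \text{PAD}^+$ validity.

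\medskip

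First I would spell out the correspondence between non-negative indeterminate assignments and valuations of $\bx$, and between initial configurations and witnesses for $\by$. Since the resilience condition $\ResCond$ is $\mathit{true}$ and $\syssize(\Pi) = z$, the environment variables $t_1,\dots,t_m$, the divisibility witnesses $d_A$, and $z$ itself may take any values in $\nn$; in particular, any $Y \in \nn^m$ arises as $C(\mathbf{t})$ for some initial configuration $C$, and any natural number is available as $C(d_A)$. Combining this with the preceding proposition, I obtain: for every non-negative $X$, the sketch $\TA[X]$ can cover $end$ iff there exists $Y \in \nn^m$ with $\xi(X,Y)$ true, i.e.\ iff $\exists \by\, \xi(X,\by)$ holds.

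\medskip

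Next I would contrapositively rephrase this as an equivalence for the non-negative coefficient synthesis problem: the instance $(\Env,\TA,end)$ admits a non-negative assignment $\mu$ such that $\TA[\mu]$ cannot cover $end$ iff there exists some non-negative $X$ with $\neg \exists \by\, \xi(X,\by)$, which is exactly the negation of $\forall \bx\, \exists \by\, \xi(\bx,\by)$. Since validity of formulas of the form~(\ref{eq:undec}) is undecidable, so is the non-negative coefficient synthesis problem.

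\medskip

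Finally I would verify acyclicity: each first-phase gadget $\TA_A$ is acyclic on its own (the only paths go $start_A \to \ell_A \to end_A$, and the two rules between $start_A$ and $\ell_A$ share source and target, hence do not create a cycle in the underlying graph); the second phase chains these gadgets sequentially via fresh rules $end_{A_i^j} \to start_{A_i^{j+1}}$ and so remains acyclic; and the third phase merely adds two fresh vertices $start$ and $end$ together with edges $start \to start_{A_i^1}$ and $end_{A_i^{l_i}} \to end$ for each $i \in I$, which again introduces no cycle. Hence $\TA$ is acyclic, and the undecidability holds already for acyclic threshold automata.

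\medskip

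The main subtlety — rather than an obstacle, since the preceding lemmas did most of the work — is making sure that restricting $\mu$ to be non-negative and taking the initial configuration to freely choose $\mathbf{t}, d_A, z$ really produces an exact match with the quantifier alternation of $\xi$: non-negativity of $X$ mirrors $\bx$ ranging over $\nn$, and the freedom in the initial configuration mirrors $\by$ ranging over $\nn$. Once this match is made explicit, the theorem falls out immediately.
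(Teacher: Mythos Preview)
Your proposal is correct and follows exactly the same approach as the paper: you unfold the preceding proposition into the biconditional ``$\forall \bx\,\exists \by\,\xi(\bx,\by)$ holds iff for every non-negative $X$ the automaton $\TA[X]$ covers $end$,'' then take the contrapositive and invoke the undecidability of $\forall\exists_R\text{PAD}^+$ validity, while checking acyclicity of $\TA$. The paper's own argument is a single sentence stating precisely this biconditional and noting that $\TA$ is acyclic; your version simply makes explicit the points the paper leaves implicit (that $\ResCond=\mathit{true}$ lets $\mathbf{t}$, $d_A$, $z$ range freely so every $Y$ is realized by some initial configuration, and the phase-by-phase acyclicity check).
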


\begin{example}
	We illustrate the above reduction on an example. Suppose we have the formula
	\begin{equation}\label{eq:example}
		\forall x_1, x_2 \ \exists y_1, y_2 \ (x_1 | y_1) \lor (x_2 | y_1 \land x_1 = 2x_2 + y_2) 
	\end{equation}
	
	Let $A, \ B,$ and $C$ denote the sub-formulas $x_1 | y_1, \ x_2 | y_1$ and $x_1 = 2x_2 + y_2$ respectively.
	For the formula~\ref{eq:example}, our reduction produces the sketch given in Figure~\ref{fig:example}.
	
	\begin{figure}
		\begin{center}
			\tikzstyle{node}=[circle,draw=black,thick,minimum size=7mm,inner sep=0.75mm,font=\normalsize]
			\tikzstyle{edgelabelabove}=[sloped, above, align= center]
			\tikzstyle{edgelabelbelow}=[sloped, below, align= center]
			\begin{tikzpicture}[->,node distance = 1.8cm,scale=0.8, every node/.style={scale=0.8},initial text = {}]
				\node[node,initial] (start) {{$start$}};
				
				\node[node, above right = of start, xshift=0.7cm] (start1) {{\small $start_A$}};
				\node[node, right = of start1] (ell1) {$\ell_A$};
				\node[node, right = of ell1, xshift=2.2cm] (end1) {{\small $end_A$}};

				\node[node, right = of start, xshift=-0.3cm] (start2) {{\small $start_B$}};
				\node[node, right = of start2] (ell2) {$\ell_B$};
				\node[node, right = of ell2, xshift=2.2cm] (end2) {{\small $end_B$}};
				
				\node[node, below right = of start, xshift=0.7cm] (start4) {{\small $start_C$}};
				\node[node, right = of start4] (ell4) {$\ell_C$};
				\node[node, right = of ell4, xshift=2.2cm] (end4) {{\small $end_C$}};
				
				\node[node, right = of end2, xshift=-1cm] (end) {$end$};
				
				\draw(start) edge (start1);
				\draw(start) edge (start2);
				
				
				\draw(start1) edge[edgelabelabove, bend left = 15] node{$\cpp{v_A}$} (ell1);
				\draw(start1) edge[edgelabelbelow, bend right = 15] node{} (ell1);
				\draw(ell1) edge[edgelabelabove] node{$v_A = s_1 \cdot d_A \ \land \ v_A = t_1$}(end1);
				
				\draw(start2) edge[edgelabelabove, bend left = 15] node{$\cpp{v_B}$} (ell2);
				\draw(start2) edge[edgelabelbelow, bend right = 15] node{} (ell2);
				\draw(ell2) edge[edgelabelabove] node{$v_B = s_2 \cdot d_B \ \land \ v_B = t_1$} (end2);
				
				\draw(end2) edge (start4);
								
				\draw(start4) edge[edgelabelabove, bend right = 15] node[below]{$\cpp{v_C}$} (ell4);
				\draw(start4) edge[edgelabelbelow, bend left = 15] node{} (ell4);
				\draw(ell4) edge[edgelabelabove] node{$v_C = s_1 \ \land \ v_C = 2s_2+t_2$} (end4);

				\draw(end1) edge (end);
				\draw(end4) edge (end);
				
			\end{tikzpicture}
		\end{center}
	\caption{Sketch for formula~\ref{eq:example}}
	\label{fig:example}
	\end{figure}
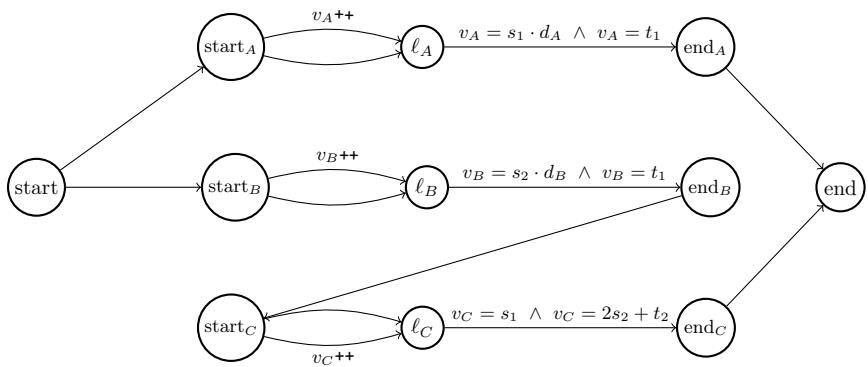

Here $s_1, s_2$ are indeterminates corresponding to $x_1, x_2$ and $t_1,t_2$ are environment variables corresponding to $y_1,y_2$. Notice that the formula is true,
because if $x_1$ is assigned the value $a$ and $x_2$ is assigned the value $b$, then we can always set $y_1$ to $a$ and $y_2$ to $b$, which will always make the first disjunct true. 
Similarly, in the sketch threshold automaton, if $\mu$ is any assignment to the indeterminates, then by letting $C$ be the (unique) initial configuration such that $C(t_1) = \mu(s_1), \ C(t_2) = \mu(s_2), \ C(z) = \mu(s_1) + 1$ and $C(d_A) = C(d_B) = 1$, we can cover $end_A$ from $C$ and so we can also cover $end$ from $C$.
\end{example}

\subsection{Wrapping up}

We can now reduce the non-negative coefficient synthesis problem
to the coefficient synthesis problem, thereby proving Theorem~\ref{th:main-th}.
%

\paragraph*{Proof of Theorem~\ref{th:main-th}.}  
	Let $(\Env,\TA,end)$ be an instance of the non-negative coefficient synthesis problem. 
	Without loss of generality, we can assume that $\TA$ is acyclic and
	has only a single initial location $start$. This is because we have shown
	earlier that the non-negative coefficient synthesis problem is already undecidable for inputs satisfying this property.
	
	Let $X$ be the set of indeterminates of $\TA$. We now add a new location $begin$ and a new shared variable $check$. $check$
	will have the invariant that it will never be incremented by any of the rules. Now, from $begin$ we add $|X|+1$
	rules as follows: First, we add a rule from $begin$ to $start$ which neither increments any variable nor has any guards.
	Then for each indeterminate $x \in X$, we add a rule from $begin$ to $end$ which has the guard $check > x$. 
	Notice that since $check$ is never incremented, it will always have the value 0 and so the guard $check > x$ will be true if and only if
	$x$ takes a negative value. Finally, we set the new initial location to be $begin$ and we let this new sketch threshold automaton be $\TA'$. Notice that $\TA'$ is acyclic.
	
	We will now prove that
	$(\Env,\TA,end)$ is a yes instance of the non-negative coefficient synthesis problem if and only if $(\Env,\TA',end)$ is a yes instance of the coefficient synthesis problem.
	
	Notice that if $\mu$ is an assignment to $X$ such that $\mu(x) < 0$ for some $x \in X$, then it is possible to move a process
	from $begin$ to $end$. Hence, if $\mu$ assigns a negative value to some indeterminate, then there is at least one run from some initial configuration in $\TA'[\mu]$ which covers $end$.
	Hence, if no initial configuration of $\TA'[\mu]$ can cover $end$, then $\mu$ has to be a non-negative assignment. But then
	it is easy to see that no initial configuration of $\TA[\mu]$ can cover $end$ as well. 
	
	Similarly, suppose $\mu$ is a non-negative assignment such that $\TA[\mu]$ does not cover $end$. Then,
	since $\mu$ is non-negative, it is clear that the only rule which can be fired from $begin$ is the rule which moves a process from $begin$ to $start$.
	Hence, it is then clear that $\TA'[\mu]$ also cannot cover $end$.

\section{Complexity of Bounded Coefficient Synthesis}\label{sec:bound-synth}
We have seen that the coefficient synthesis problem for threshold automata is undecidable. 
Since the reachability relation for threshold automata is decidable~\cite[Corollary 1]{Complexity}, the source of undecidability
stems from the unboundedness of the values of the indeterminates needed to satisfy the given
specification. 
Intuitively, this is undesirable, because we would ideally like our indeterminates to not take very big values.
This leads to the following \emph{bounded coefficient synthesis} problem, where we are given a 
tuple $(\Env,\TA,\ell)$ as in the coefficient synthesis problem, and in addition, we are given an interval $[A,B]$ with $A, B \in \zn$.
We are then asked to check if there is an integer assignment $\mu$ to the indeterminates such that $A \le \mu(x) \le B$ for every indeterminate $x$
and $\TA[\mu]$ does not cover $\ell$. \footnote{We can also allow a separate interval for each indeterminate. All the results in this section would be applicable to that case as well.}

In this section, we will revisit the known upper and lower bounds for this problem and present
different algorithms and constructions for both the upper and the lower bounds. Then, we will consider a 
special case based on existing benchmarks from the literature and provide better bounds for this case.
We begin by concentrating on the upper bound.

\subsection{Upper Bound for Bounded Coefficient Synthesis}

To start with, the following upper bound is known for the bounded coefficient synthesis problem.

\begin{theorem}\label{thm:bounded-sythesis}~\cite[Theorem 7]{Complexity}
	Bounded coefficient synthesis is in $\Sigma_2^p$.
	\footnote{In~\cite{Complexity} this $\Sigma_2^p$ upper bound was even proven for a broader class of specifications at the cost of restricting the type of automata to those satisfying a property called \emph{multiplicativity}.}
\end{theorem}

To explain this result, we need to state a few results regarding threshold automata.
It is known that given a threshold automaton $\TA$, we can construct an exponential-sized formula $\xi$ in \emph{existential Presburger arithmetic} which characterizes precisely the reachability relation of $\TA$~\cite[Theorem 5]{Complexity}.
Since the exponential dependence only comes from exponentially many disjuncts, given two configurations $\sigma$
and $\sigma'$ of $\TA$, we can decide if $\sigma \act{*} \sigma'$ in \NP \ by simply \emph{guessing} and constructing one of the disjuncts of $\xi$ and then
using the fact that the existential theory of PA can be decided in \NP.

Note that once we have this result, checking whether a given location $\ell$ cannot be covered is in \coNP: we quantify universally over all pairs of configurations $\sigma$ and $\sigma'$ and verify that
if $\sigma$ is an initial configuration and $\sigma'(\ell) > 0$, then $\sigma$ cannot reach $\sigma'$.
With this observation, bounded synthesis can then be easily seen to be in $\Sigma_2^p$: simply guess
a value for each indeterminate within the given range, plug-in the guessed values into the sketch and then run the \coNP \ decision procedure.
%

This argument implies that if we were able to reduce the size of the formula characterizing the reachability
relation, then it would automatically lead to shorter formulas for the bounded coefficient synthesis problem as well.
Within this context, our main contribution is the following theorem.

\begin{theorem}\label{thm:quad-size}
	Given a pair $(\Env,\TA)$, we can construct in polynomial time, a polynomial-sized formula $\phi$ of existential
	Presburger arithmetic such that $\sigma \act{*} \sigma'$ is true in $\TA$ if and only if $\phi(\sigma,\sigma')$ is true.
\end{theorem}

Hence, this result improves on the previous upper bound of formulas characterizing the
reachability relation. In the rest of this subsection, we will prove Theorem~\ref{thm:quad-size}. 
The formula that we give is similar to the formula given in~\cite[Theorem 4]{CADE2005} for 
the Parikh image of a CFG.
Further since our formula differs only slightly 
from the formula given in~\cite{Complexity}, we will present the formula in a similar manner as the one 
given in~\cite{Complexity}. 

\paragraph*{Existential PA formula for reachability.} Fix a threshold automaton $\TA = (\local,\initlocal,\globset,\ruleset)$
over an environment $\Env$. First, we will construct a polynomial-sized formula for a \emph{restricted form of reachability} between configurations. 
We now proceed to describe this restricted reachability relation.

Let $g$ be a guard of $\TA$ of the form $b \cdot x \bowtie a_0 + a_1 \cdot p_1 + \ldots + a_{|\paraset|} \cdot p_{|\paraset|}$. Without loss of generality, we can assume that $\bowtie \ \in \{\ge, <\}$, since any other
guard could be written as a conjunction of guards of this form.
Then, $g$ is called a \emph{rise guard} (resp. \emph{fall guard}) if $\bowtie \ =  \ \ge$ 
(resp. $\bowtie \ = \ <$).

Given a configuration $\sigma$, the \emph{context} of $\sigma$, denoted by $\omega(\sigma)$ is the set of all rise guards
that evaluate to true and the set of all fall guards that evaluate to false in $\sigma$. 
We say that a run $\sigma \act{*} \sigma'$ is \emph{steady} if the set of all configurations visited along this run
have the same context. Since shared variables are only incremented and environment variables never change values along a run, it follows that if a rise guard
becomes true at some point along a run, then it stays true throughout the run. Similarly, if 
a fall guard becomes false at some point along a run, then it stays false throughout the run.
It then follows that a run is steady if and only if
the first and the last configurations of this run have the same context.
We will now construct a formula $\phi_{steady}$ with $(2|\local| + 2|\globset| + 2|\paraset|)$ free variables
such that $\phi_{steady}(\sigma,\sigma')$ is true if and only if $\sigma$ and $\sigma'$ have the same context
and $\sigma$ can reach $\sigma'$.

\paragraph*{The formula $\phi_{steady}$. } For every rule $r \in \ruleset$, we will introduce a variable $x_r$, which will
intuitively denote the number of times the rule $r$ is fired during the (supposed) run from $\sigma$ to $\sigma'$. 
Let $X = \{x_r\}_{r \in \ruleset}$.
The formula $\phi_{steady}$ is obtained by a conjunction of various subformulas, described as follows.

\paragraph{Subformula 1.} $\sigma$ and $\sigma'$ must have the same context, the same number of processes and the same values over the environment variables which must also satisfy
the resilience condition.
$$\phi_\mathit{base}(\sigma,\sigma') \ \equiv \;  \sigma.\param = \sigma'.\param \; \wedge \;
\ResCond(\sigma.\param) \; \wedge \; \syssize(\sigma.\param) = 
\syssize(\sigma'.\param) \; \wedge \;
\statectx(\sigma)=\statectx(\sigma')$$

\paragraph{Subformula 2.} For a location $\ell\in\local$, 
let $out^\ell_1,\ldots,out^\ell_{a_\ell}$ be all the outgoing rules from~$\ell$ and let $in^\ell_1,\dots,in^\ell_{b_\ell}$ be all the 
incoming rules to~$\ell$. The number of processes in $\ell$ after the run must be the initial number, plus the incoming processes, minus the outgoing processes. Hence, we have
$$	\phi_\local(\sigma,\sigma',X) \ \equiv \quad
\bigwedge_{\ell \in \local} \ \left(\sum_{i=1}^{b_\ell} x_{in^\ell_i} - \sum_{j=1}^{a_\ell} x_{out^\ell_j} =
\sigma'.\counters(\ell) - \sigma.\counters(\ell)\right)
$$
\paragraph{Subformula 3.} Similarly, for the shared variables we must have
$$  \phi_\globset(\sigma,\sigma',X)  \ \equiv \quad
\bigwedge_{z\in\globset} \ \left(\sum_{r\in\ruleset} (x_{r} \,\cdot\, r.\update[z]) = \sigma'.\vars[z] - \sigma.\vars[z]\right)
$$
\paragraph{Subformula 4.} Since we are searching for a steady run between $\sigma$ and $\sigma'$, for a rule to be fired
along this run, it is necessary that its guards are true in $\sigma$.
$$\phi_\ruleset(\sigma,X) \ \equiv \quad \bigwedge_{r \in \ruleset} \ x_r > 0 \ \Rightarrow \ (\sigma\models r.\varphi)$$
\paragraph{Subformula 5.} Finally, we introduce a collection of variables $Y = \{y_r\}_{r \in \ruleset}$, which intuitively captures the following observation: If a rule $r$ is fired in a run between
$\sigma$ and $\sigma'$, then either $\sigma(r.\fromstate) > 0$ or
there must be a rule $r'$ which is fired before $r$ such that $r'.\tostate = r.\fromstate$.
The following formula enforces this condition using the variables $Y$.

$$  \phi_{\textit{appl}}(\sigma,X,Y) \ \equiv \quad
\bigwedge_{r \in \ruleset} \; x_r > 0 \ \Rightarrow \ \phi_\mathit{chain}^r(\sigma,X,Y)
$$
where
\begin{multline*}
	\phi_\mathit{chain}^r(\sigma,X,Y) \ \equiv \;
	\left(\sigma(r.\fromstate) > 0 \ \land \ y_r = 1\right) \ 
	 \lor \ \left(\bigvee_{r' \in Pre(r)} (x_{r'} > 0 \ \land \ y_r = y_{r'} + 1) \right)	
\end{multline*}
and $Pre(r) = \{r' : r'.\tostate = r.\fromstate\}$.

\paragraph{Combining the steps. } We then define $\phi_\mathit{steady}(\sigma,\sigma')$ as
\begin{multline*}\label{eq:kirchoff}
	\phi_\mathit{steady}(\sigma,\sigma') 
	\ \equiv \ \phi_\mathit{base}(\sigma,\sigma') \ \land \ \exists X, Y \ 
	\phi_\local(\sigma,\sigma',X)  \ \land \ 
	\phi_\globset(\sigma,\sigma',X)  \ \land \\
	\phi_\ruleset(\sigma,X)  \ \land  \
	\phi_{\textit{appl}}(\sigma,X,Y)	
\end{multline*}
\noindent Notice that the size of $\phi_{steady}$ is polynomial in the size of the given threshold automaton and the environment. We now have the following theorem.

\begin{theorem}\label{th:kirchoffpath}
	Let~$\TA$ be a threshold automaton and let $\sigma, \sigma'$ be two configurations.
	Formula $\phi_\mathit{steady}(\sigma,\sigma')$ is satisfiable if and only if 
	there is a steady run between $\sigma$ and $\sigma'$.
\end{theorem}

Before proving this theorem, let us see how it implies Theorem~\ref{thm:quad-size}. 
If the underlying threshold automaton has $K$ guards, then given \emph{any} formula $\theta$ which characterizes the steady reachability relation, the authors of~\cite{Complexity} come up with a formula $\theta'$ whose size is at most
$O(K) \times |\theta|$ such that $\theta'$ characterizes the reachability relation, i.e., $\theta'(\sigma, \sigma')$
is true if and only if $\sigma$ can reach $\sigma'$.
Using their procedure, Theorem~\ref{th:kirchoffpath} then implies that we have a polynomial sized formula for the reachability relation and proves Theorem~\ref{thm:quad-size}.

Hence, all that is left is to prove Theorem~\ref{th:kirchoffpath}, which we do now.

\paragraph*{Proof of Theorem~\ref{th:kirchoffpath}.} 
	In~\cite{Complexity}, the authors present a formula $\xi_{steady}$ such that $\xi_{steady}(\sigma,\sigma')$ is 
	satisfiable if and only if there is a steady run between $\sigma$ and $\sigma'$.
	Hence to prove the theorem it suffices to show that $\phi_{steady}(\sigma,\sigma')$ is satisfiable if and only if $\xi_{steady}(\sigma,\sigma')$ is satisfiable.
	
	To do this, we first explain the formula $\xi_{steady}(\sigma,\sigma')$. Similar to $\phi_{steady}$,
	$\xi_{steady}(\sigma,\sigma')$ is of the form 
	\begin{eqnarray*}\label{eq:kirchoff}
		\xi_\mathit{base}(\sigma,\sigma') \ \land  \ \exists X \ \xi_\local(\sigma,\sigma',X) \ \land \
		\xi_\globset(\sigma,\sigma',X) \  \land \
		\xi_\ruleset(\sigma,X)  \ \land \  
		\xi_{\textit{appl}}(\sigma,X) 
	\end{eqnarray*}
	where $\xi_q$ is the same as $\phi_q$ for every $q \in \{base, \local, \globset, \ruleset\}$
	and $\xi_{appl}(\sigma,X)$ is defined as follows:
	$$  \xi_{\textit{appl}}(\sigma,X) \ \equiv \quad
	\bigwedge_{r \in \ruleset} \; \left( x_r > 0 \ \Rightarrow \ \bigvee_{S=\{r_1,r_2,\dots,r_s\} \subseteq \ruleset} \xi_\mathit{chain}^r(S,\sigma,X)\right)
	$$
	where
	\begin{multline*}
		\xi_\mathit{chain}^r(S,\sigma,X) \ \equiv \
		\bigwedge_{1 \le i \le s} x_{r_i} > 0 \ \land \  \sigma.\counters(r_1.\fromstate) > 0 \ \land \
		\bigwedge_{1 < i \le s} r_{i-1}.\tostate = r_i.\fromstate \; \land \; r_s = r	
	\end{multline*}
	
	Now, let $Z$ be any non-negative assignment to the variables of $X$, i.e., we assign to the variable $x_r$, the natural number $z_r$.
	Suppose we show that $\xi_{appl}(\sigma,Z)$ is true if and only if there exists an non-negative assignment $Z'$ to the variables of $Y$ such that $\phi_{appl}(\sigma,Z,Z')$ is true.
	Then notice that our proof would be complete. This is what we proceed to do now.
	
	($\Rightarrow$): Suppose $\xi_{appl}(\sigma,Z)$ is true. Hence, for every rule $r$ such that $z_r > 0$, there is a subset $S_r = \{t_1,\dots,t_{|S_r|}\} \subseteq \ruleset$
	such that for every $i$, $z_{t_i} > 0,$ $\sigma(t_1.\fromstate) > 0,$ $t_{|S_r|} = r$ and for every $i > 1$, $t_{i-1}.\tostate = t_i.\fromstate$. We then construct our assignment $Z'$ to the variables of $Y$ by induction on the size of $|S_r|$.

	First, if $z_r = 0$, then we set $z'_r = 0$ as well.
	Then, if $z_r > 0$ and $|S_r| = 1$, we set $z'_r = 1$.
	Finally, if $z_r > 0$ and $|S_r| > 1$, then let $t$ be the penultimate rule of 
	$S_r$. Notice that we can assume that $|S_t| < |S_r|$, because the set $S_r \setminus \{r\}$ satisfies all the conditions needed
	for $S_t$. Furthermore $z_t > 0$ and 
	so by the induction hypothesis, we have already defined $z'_t$. We then set $z'_r = z'_t + 1$. 

	Now, suppose $z_r > 0$ for some rule $r$. Then notice that $S_r$ exists and is non-empty.
	If $|S_r| = 1$, then 
	$\sigma(r.\fromstate) > 0$ and we have set $z'_r = 1$. 
	On the other hand if $|S_r| > 1$, then 
	letting $t$ be the penultimate rule of $S_r$, we have that $t.\tostate = r.\fromstate, 
	z_t > 0$ and $z'_r = z'_t + 1$. It then follows that $\phi^r_{chain}(\sigma,Z,Z')$ is true
	for every $r$ with $z_r > 0$. This then implies that $\phi_{appl}(\sigma,Z,Z')$ is also true.
	
	($\Leftarrow$): Now, suppose there exists a non-negative assignment $Z'$ to the variables of $Y$ such that $\phi_{appl}(\sigma,Z,Z')$ is true.
	We need to show that $\xi_{appl}(\sigma,Z)$ is true. To do this, we need to show that if $z_r > 0$ then there is a 
	subset $S_r = \{t_1,\dots,t_{|S_r|}\} \subseteq \ruleset$
	such that for every $i$, $z_{t_i} > 0$, $\sigma(t_1.\fromstate) > 0$, $t_{|S_r|} = r$ and for every $i > 1$, $t_{i-1}.\tostate = t_i.\fromstate$.
	We do this by induction on the value of $z'_r$. 
	
	First, note that $z'_r$ cannot be 0. Indeed if $z'_r = 0$, then since $z_r > 0$, by definition of $\phi_{appl}(\sigma,Z,Z')$
	it must be the case that there must be a rule $t$ such that $z'_r = z'_t + 1$. But this would
	mean that $z'_t = -1$, contradicting the fact that $Z'$ is a non-negative assignment. Hence, $z'_r > 0$.
	
	Suppose $z'_r = 1$. Then by definition of $\phi_{appl}(\sigma,Z,Z')$, it must be the case that $\sigma(r.\fromstate) > 0$.
	Indeed, if $\sigma(r.\fromstate) = 0$, then since $z_r > 0$, there must be a rule $t$ such that $z_t > 0$ and
	$z'_r = z'_t + 1$. This means that $z'_t = 0$ and $z_t > 0$, which, as we have shown in the previous paragraph, cannot happen. 
	Hence, $\sigma(r.\fromstate) > 0$ and so we can simply set $S_r$ to be $\{r\}$.
	
	Suppose $z'_r > 1$. By definition of $\phi_{appl}(\sigma,Z,Z')$, it must be the case that there is a rule $t$ such that
	$t.\tostate = r.\fromstate, z_t > 0$ and $z'_r = z'_t + 1$. 
	By induction hypothesis, we have already constructed the set $S_t$ for $t$. We then set $S_r = S_t \cup \{r\}$. It is then easy to verify that the constructed sets $S_r$ for each rule $r$ satisfy the desired property. This completes the proof of Theorem~\ref{th:kirchoffpath}.

\subsection{Lower Bound for Bounded Coefficient Synthesis}

In this subsection, we prove the following lower bound.

\begin{theorem}\label{thm:bounded-synth-hard}
	Bounded coefficient synthesis is $\Sigma_2^p$-hard, even for acyclic threshold automata.
\end{theorem}

Before, we move on to the proof of this theorem, we make a remark.

\begin{remark}
	A $\Sigma_2^p$ lower bound was also proven in~\cite[Theorem 8]{Complexity} for bounded coefficient synthesis
	against arbitrary violations from the $\ELTLFT$ logic. In particular, that reduction did not use a coverability violation.
	By modifying that reduction, it is possible to give a $\Sigma_2^p$ lower bound for coverability violations as well. However,
	here we give a self-contained proof of that same result.
\end{remark}

We prove this theorem by giving a reduction from the $\Sigma_2$-SAT problem.
Here we are given a Boolean formula of the form
\begin{equation}\label{eq:bool}
	\exists x_1, \dots, x_n \ \forall y_1,\dots,y_m \
	\bigvee_{1 \le j \le k} D_j
\end{equation}
where each $D_j$ is a conjunction of literals, i.e.,
it is a conjunction of entries from $\{x_1,\dots,x_n,y_1,\dots,y_m\} \cup \{\overline{x_1},\dots,\overline{x_n},\overline{y_1},\dots,\overline{y_m}\}$.
Given such a formula, the task is then to decide whether it is valid. This problem
is known to be $\Sigma_2^p$-hard~\cite[Corollary 6]{Wrathall76}.

Now, assume that we are given a formula $\xi$ of the form~\ref{eq:bool}. 
Let $\bx$ and $\by$ denote the sets of variables $x_1,\dots,x_n$ and $y_1,\dots,y_m$ respectively.
Our reduction will now proceed in two stages. We describe
the first stage now.

\paragraph*{First stage: The environment.} We begin by defining the environment  $\Env = (\paraset, \ResCond, \syssize)$. We will 
have $m$ environment variables $t_1,\dots,t_m$, with each $t_i$ intuitively corresponding to the variable $y_i$ in the formula $\xi$.
We will also have another environment variable $z$, which will intuitively denote the total number of participating processes.

The resilience condition $\ResCond$ will be the trivial condition $true$. The linear function $N : \ResCond \to \nn$
is taken to be $N(\Pi) = z$.  Hence, the total number of processes executing the threshold automaton will be $z$. 

\paragraph*{Second stage: The indeterminates and the sketch.} 
For each variable $x_i$ in the formula $\xi$, we will have an indeterminate $s_i$. Before we proceed with the description of the sketch, we set up some notation.

We let $\mathbf{s}$ denote the set of indeterminates $s_1,\dots,s_n$ and 
$\mathbf{t}$ denote the set of environment variables $t_1,\dots,t_m$. For any literal $\ell$,
\begin{itemize}
	\item If $\ell = x_i$ (resp. $y_i$) for some $i$, let $F(\ell)$ be the term $1-s_i$ (resp. $1-t_i$).
	\item If $\ell = \overline{x_i}$ (resp. $\overline{y_i}$) for some $i$, let $F(\ell)$ be the term $s_i$ (resp. $t_i$).
\end{itemize}

Given any valuation $X$ to the Boolean variables $x_1,\dots,x_n$, 
let $B(X)$ be the assignment to the indeterminates $s_1,\dots,s_n$ which assigns each $s_i$ the value 1 if $X(x_i)$ is true and the value 0 if $X(x_i)$ is false.
Given any valuation $Y$ to $y_1,\dots,y_m$, let $V(Y)$ be the \emph{set} of assignments
to the environment variables $t_1,\dots,t_m$ which assigns each $t_i$ any strictly positive value
if $Y(y_i)$ is true and the value 0 if $Y(y_i)$ is false. Note that $B(X)$ is a single assignment
whereas $V(Y)$ is a set of assignments.

Now, let $D_j$ be any conjunct of the formula $\xi$.
Corresponding to $D_j$, let $E_j$ be the term defined by $E_j = \sum_{\ell \in D_j} F(\ell)$.
From the definition of $E_j$, we have the following lemma.

\begin{lemma}\label{lem:first-phase}
	Let $X$ and $Y$ be any assignments to the variables $\bx$ and $\by$ respectively.
	Then $D_j(X,Y)$ is true if and only if for any assignments $S,T$ with $S = B(X)$ and $T \in V(Y)$,
	$E_j(S,T) \le 0$.
\end{lemma}

\begin{proof}
	($\Rightarrow$): Suppose $D_j(X,Y)$ is true. This means
	that any literal $\ell$ that appears in $D_j$ is set to true by the assignments $X$ and $Y$.
	Let $S = B(X)$ and $T \in V(Y)$ be some two assignments. Note that $E_j$ is a sum of terms
	of the form $F(\ell)$ with $\ell \in D_j$. 
	
	Now, pick any $\ell \in D_j$. By assumption, $\ell$ is set to true by $X$ and $Y$. By definition of $F(\ell)$, $S$ and $T$, it follows that $F(\ell)$ evaluates to a value that is at most 0 under $S$ and $T$. Hence, it follows that $E_j(S,T) \le 0$.
	
	($\Leftarrow$): Suppose for any pair of assignments $S, T$ with  $S = B(X)$ and $T \in V(Y)$ we have
	that $E_j(S,T) \le 0$.
	Pick the assignment $T' \in V(Y)$ such that $T'(t_i) = 1$ if $Y(y_i)$ is true
	and $T'(t_i) = 0$ otherwise. 
	By assumption $E_j(S,T') \le 0$. Note that $E_j$ is a sum of terms of the form
	$F(\ell)$ with $\ell \in D_j$. 
	
	Now, pick any $\ell \in D_j$. By construction of $F(\ell)$, $S$ and $T'$ it follows that
	its value cannot go strictly below the value 0 under the assignments $S$ and $T'$.
	This combined with the fact that $E_j(S,T') \le 0$ implies that $F(\ell)$ evaluates
	to the value 0 under $S$ and $T'$. By definition of $F(\ell)$,
	this immediately implies that $\ell$ is set to true by the assignments $X$ and $Y$.
	Since $\ell$ was any arbitrary literal from $D_j$, it follows that $D_j(X,Y)$ is true.
\end{proof}

We will now construct the desired sketch $\TA$. Recall that the formula $\xi$ has $k$ disjuncts
$D_1, D_2 \dots, D_k$. For each $D_j$, the sketch
will have one location $start_j$. Furthermore, $\TA$ has another location $start_0$ 
and a single shared variable $v$. The rules of the sketch $\TA$ are now given as follows: 
For every $1 \le j \le k$, there is a rule between $start_{j-1}$ and $start_j$ with
the threshold guard $v < E_j$.

Let the initial set of locations of $\TA$ be $\{start_0\}$. It is easy to see that the constructed sketch is acyclic. 
%
We now show the following lemma which proves the correctness of our construction.

\begin{lemma}\label{lem:second-phase}
	Let $X$ and $Y$ be any assignments to the variables $\bx$ and $\by$ respectively.
	Then $D_j(X,Y)$ is true for some $1 \le j \le k$ if and only if 
	for any assignments $S, T$ with $S = B(X)$ and $T \in V(Y)$, 
	any initial configuration $C$ with $C(\mathbf{t}) = T$ cannot cover $start_k$ in $\TA[S]$.
\end{lemma}

\begin{proof}
	($\Rightarrow$): Suppose there exists some $1 \le j \le k$ such that $D_j(X,Y)$ is true. 
	Let $S = B(X)$ and $T \in V(Y)$ be two assignments to $\mathbf{s}$ and $\mathbf{t}$ respectively.
	By Lemma~\ref{lem:first-phase}, it follows that $E_j(S,T) \le 0$. By construction
	of $\TA[S]$, the shared variable $v$ is never incremented and so it will always have the value
	0 on any run starting from any initial configuration. Since $E_j(S,T) \le 0$,
	it then follows that in any run from any initial configuration, the rule
	between $start_{j-1}$ and $start_{j}$ is never enabled and hence can never be fired. 
	This then implies that $start_{j}$ is not coverable from any initial configuration $C$ with $C(\mathbf{t}) = T$.
	By the structure of $\TA$, it follows that $start_{k}$ cannot also be covered.
	
	($\Leftarrow$): Suppose for any assignments $S, T$ with $S = B(X)$ and $T \in V(Y)$,
	any initial configuration $C$ with $C(\mathbf{t}) = T$ cannot cover $start_k$ in $\TA[S]$.
	Let $T'$ be the assignment in $V(Y)$ such that $T'(t_i) = 1$ if $Y(y_i)$ is true
	and $T'(t_i) = 0$ if $Y(y_i)$ is false.
	By assumption, no initial configuration with $C(\mathbf{t}) = T'$ must be able
	to cover $start_{k}$ in $\TA[S]$. Let $j$ be the smallest index such that no initial configuration $C$ with $C(\mathbf{t}) = T'$ can cover $start_j$ in $\TA[S]$. 
	Note that $j > 0$, as $start_0$ is an initial location.
	
%
 	Now, consider the rule between $start_{j-1}$ and $start_j$. 
	By assumption on $start_j$, it must be the case that $v < E_j(S,T')$ cannot be true, as
	otherwise, $start_j$ can be covered from $C$.
	Since the shared variable $v$ is never incremented, it follows that $E_j(S,T') \le 0$. 
	In the second part of the proof of Lemma~\ref{lem:first-phase}, we have
	already proven that $E_j(S,T') \le 0$ (for this specific assignment $T'$) already
	implies that $D_j(X,Y)$ is true. This completes the proof. 
\end{proof}

Notice that any assignment $S$ to the indeterminates $\mathbf{s}$ within the range $[0,1]$
is of the form $B(X)$ for some assignment
$X$ to the variables $\bx$. Similarly, any assignment $T$ to the environment variables $\mathbf{t}$ belongs
to $V(Y)$ for some assignment $Y$ to the variables $\by$.
Hence, by Lemma~\ref{lem:second-phase}, we get that 
there exists an assignment $X$ to $\bx$ such that for all 
assignments $Y$ to $\by$, at least one $D_j(X,Y)$ is true if and only if there 
exists an assignment $S$ to the indeterminates $\mathbf{s}$ within the range $[0,1]$
such that any initial configuration cannot cover $start_k$ in $\TA[S]$. 
Therefore, Theorem~\ref{thm:bounded-synth-hard} now follows.

\begin{example}
	We illustrate the above reduction on an example. Suppose we have the formula
	\begin{equation}\label{eq:example2}
		\exists x_1, x_2 \ \forall y_1, y_2 \ (x_1 \land y_1 \land \lnot x_2) \lor (\lnot y_1 \land y_2 \land \lnot x_2) 
		\lor (\lnot y_2 \land x_1)
	\end{equation}
	Let $D_1$, $D_2$ and $D_3$ be the sub-formulas $x_1 \land y_1 \land \lnot x_2$, $\lnot y_1 \land y_2 \land \lnot x_2$ and $\lnot y_2 \land x_1$ respectively. Correspondingly, we get the terms
	$E_1 = (1-s_1) + (1-t_1) + s_2$, $E_2 = t_1 + (1-t_2) + s_2$ and $E_3 = t_2 + (1-s_1)$. 
	Now, for the formula~\ref{eq:example2}, our reduction produces the sketch given in Figure~\ref{fig:hardness}.
	\begin{figure}
		\begin{center}
				\tikzstyle{node}=[circle,draw=black,thick,minimum size=7mm,inner sep=0.75mm,font=\normalsize]
				\tikzstyle{edgelabelabove}=[sloped, above, align= center]
				\tikzstyle{edgelabelbelow}=[sloped, below, align= center]
				\begin{tikzpicture}[->,node distance = 0.5cm,scale=0.8, every node/.style={scale=0.8},initial text = {}]
						\node[node, initial] (start0) {{$start_{0}$}};
						\node[node, right = of start0, xshift = 3.25cm] (start1) {{$start_{1}$}};
						\node[node, right = of start1, xshift = 2.5cm] (start2) {{$start_{2}$}};
						\node[node, right = of start2, xshift = 1.8cm] (start3) {{$start_{3}$}};
						
						\draw(start0) edge[edgelabelabove] node{{\scriptsize $v < (1-s_1) + (1-t_1) + s_2$}} (start1);
%
						\draw(start1) edge[edgelabelabove] node{\scriptsize{$v < t_1 + (1-t_2) + s_2$}} (start2);
%
						\draw(start2) edge[edgelabelabove] node{\scriptsize{$v < t_2 + (1-s_1)$}} (start3);
%
					\end{tikzpicture}
			\end{center}
		\caption{Sketch for formula~\ref{eq:example2}}
		\label{fig:hardness}
	\end{figure}
	
	Note that the formula is true. Indeed, suppose we set $x_1$ to true and $x_2$ to false.
	Then, $D_1$ can be made false only by setting $y_1$ to false, and $D_2$ can be made false
	only by setting either $y_1$ to true or $y_2$ to false, and $D_3$ can be made false
	only by setting $y_2$ to false. It follows then that for any assignment to $y_1$ and $y_2$,
	at least one of $D_1$ or $D_2$ or $D_3$ is true. 
	
	Correspondingly, in the sketch, if we set $s_1$ to 1 and $s_2$ to 0, we cannot cover $start_3$.
	Indeed, with this assignment to the indeterminates, the first guard becomes $v < 1 - t_1$,
	the second guard becomes $v < t_1 + 1 - t_2$ and the third guard becomes $v < t_2$.
	Since $v$ is never incremented, in order to cover $start_3$, the inequalities
	$0 < 1 - t_1, \ 0 < t_1 + 1 - t_2$ and $0 < t_2$ must all be simultaneously satisfied.
	However, if the first inequality is true, then $t_1 = 0$ and so we must have
	$0 < 1 - t_2$ and $0 < t_2$, which cannot be both true. Hence, $start_3$ cannot be covered
	if we set $s_1$ to 1 and $s_2$ to 0.

\end{example}

\subsection{A Special Case of Bounded Coefficient Synthesis}

Motivated by the shape of threshold guards appearing in practice, we now consider a special class of sketch threshold automata for which we can obtain better bounds for the bounded coefficient synthesis problem (against coverability properties). We first describe this special class and then state our results.

A sketch is said to have no indeterminate fall guards if for every fall guard $b \cdot x < a_0 + \sum_{1 \le i \le |\paraset|} a_i \cdot p_i$ of the sketch,
all of the entries in $\{b,a_0,a_1,\dots,a_{|\paraset|}\}$ are integers and not indeterminates. 
As mentioned in~\cite{FSTTCS20}, shared variables in threshold automata are typically used for two things: To record the number of messages
of a specific type that has been broadcasted and to keep track of the number of processes crashed so far. If a shared variable $v$ is
used for the first purpose, then all guards containing $v$ are typically rise guards. If $v$ is used for the latter purpose, then we will usually only have a fall guard of the form
$v < f$ where $f$ is the maximum number of processes allowed to crash. This means that there is no need to synthesize coefficients for fall guards in these cases.
Indeed, for almost all of the benchmarks from~\cite{LKWB17:opodis}, fall guards are of this type,
and hence the subclass that we consider here is interesting from a practical point of view.
We now show that 
\begin{theorem}\label{thm:guard-free}
	The bounded coefficient synthesis problem (against coverability properties) for threshold automata with no indeterminate fall guards 
	is \coNP-complete.
\end{theorem}

Hardness follows from the fact that checking if a location is coverable in a threshold automaton with no fall guards is \NP-hard~\cite{FSTTCS20}. Hence we concentrate on proving the
upper bound.

Let $(\Env,\TA,\ell, I)$ be an input for the bounded coefficient synthesis problem where $\TA$ has no indeterminate fall guards and $I = [A,B]$ with $A,B \in \zn$. Given two assignments $X$ and $Y$ to the indeterminates of $\TA$, we say that $X \le Y$ if $X(s) \le Y(s)$ for every indeterminate $s$.
Let $max$ be the assignment given by $max(s) = B$ for every indeterminate $s$.
We have the following proposition whose proof follows from the definition of our subclass.

\begin{proposition}\label{prop:max-assignment}
	Suppose $X, Y$ are assignments to the indeterminates of $\TA$ such that $X \le Y$.
	If a rule $r$ is enabled at a configuration $\sigma$ in $\TA[Y]$, then $r$ is also enabled at $\sigma$ in $\TA[X]$.	
\end{proposition}

\begin{proof}
	Let $r$ be enabled at a configuration $\sigma$ in $\TA[Y]$ and let $g_1,\dots,g_k$ be the 
	set of threshold guards appearing in $r$. For each $1 \le i \le k$, consider the guard $g_i$. 
	\begin{itemize}
		\item Suppose $g_i$ is a rise guard. Then, since $X \le Y$, it follows that every indeterminate
		in $g_i$ is assigned a lower value in $X$ than in $Y$. Since, $g_i$ is true in $\TA[Y]$ at the
		configuration $\sigma$, it then follows that $g_i$ is also true in $\TA[X]$ at $\sigma$.
		\item Suppose $g_i$ is a fall guard. Then, since fall guards in $\TA$ do not have any indeterminates, it follows that if $g_i$ is true in $\TA[Y]$ at $\sigma$, then it is also
		true in $\TA[X]$ at $\sigma$.
	\end{itemize}
	This completes the proof.
\end{proof}

The above proposition gives us the following useful corollary.

\begin{corollary}
	Suppose $\rho$ is a run of $\TA[Y]$. Then for every $X$ such that $X \le Y$,
	$\rho$ is also a run of $\TA[X]$. Consequently, either $\ell$ is not coverable in $\TA[max]$
	or $\ell$ is coverable for any assignment to the indeterminates within the range $I$.
\end{corollary}

This means that for this class, we can simply reduce bounded coefficient synthesis against coverability properties to checking
coverability: Given a sketch $\TA$, compute $\TA[max]$ and check if it covers $\ell$ or not.
Since in \NP, we can check if $\TA[max]$ can cover $\ell$~\cite[Corollary 1]{Complexity},
this then proves Theorem~\ref{thm:guard-free}.

\section{Conclusion}\label{sec:concl}
We have shown that the coefficient synthesis problem for threshold automata is undecidable, even when the given sketch threshold automaton is acyclic and the
violation is given by a coverability property. 
This already implies that if we have a class of properties
capable of expressing the coverability properies, then the coefficient synthesis problem generalized
to that class is also undecidable. For instance, this implies that coefficient synthesis for the class of properties from the $\ELTLFT$ logic~\cite{ELTLFT}, which has been used
to express various properties of threshold automata obtained from distributed algorithms, is also undecidable.
By the same discussion, our results also imply that bounded coefficient synthesis against properties from $\ELTLFT$
is also $\Sigma_2^p$-hard. (However, this result was already known~\cite[Theorem 8]{Complexity} and our main contribution towards this lower bound in this paper was to prove it for coverability properties).
Finally, since our upper bound result pertains to an efficient encoding of the reachability relation into
existential Presburger arithmetic, it can be used for bounded coefficient synthesis against other classes of properties as well.

As part of future work, it might be interesting to study the (bounded) coefficient synthesis problem
when the assignments to the indeterminates are forced to satisfy a property called \emph{multiplicativity}. 
The usefulness of this property stems from the fact that it has been utilized to get some efficient model-checking algorithms for threshold automata~\cite{ELTLFT,Complexity,FSTTCS20}.
It might also be interesting to study these problems for the case when fall guards do not have indeterminates, which as observed in the previous section can be motivated by cases occurring in the benchmarks from literature.

\paragraph*{Acknowledgments. } This project has received funding from the European Research Council (ERC) under the European Union's Horizon 2020 research and innovation programme under grant agreement No 787367 (PaVeS). This research was sponsored in part by the Deutsche Forschungsgemeinschaft project \href{https://gepris.dfg.de/gepris/projekt/389792660}{389792660} TRR 248–CPEC.

\bibliographystyle{splncs04}
\bibliography{References}

\begin{thebibliography}{10}
\providecommand{\url}[1]{\texttt{#1}}
\providecommand{\urlprefix}{URL }
\providecommand{\doi}[1]{https://doi.org/#1}

\bibitem{AlurHV93}
Alur, R., Henzinger, T.A., Vardi, M.Y.: Parametric real-time reasoning. In:
  Kosaraju, S.R., Johnson, D.S., Aggarwal, A. (eds.) Proceedings of the
  Twenty-Fifth Annual {ACM} Symposium on Theory of Computing, May 16-18, 1993,
  San Diego, CA, {USA}. pp. 592--601. {ACM} (1993). \doi{10.1145/167088.167242}

\bibitem{FSTTCS20}
Balasubramanian, A.R.: Parameterized complexity of safety of threshold
  automata. In: Saxena, N., Simon, S. (eds.) 40th {IARCS} Annual Conference on
  Foundations of Software Technology and Theoretical Computer Science, {FSTTCS}
  2020, December 14-18, 2020, {BITS} Pilani, {K} {K} Birla Goa Campus, Goa,
  India (Virtual Conference). LIPIcs, vol.~182, pp. 37:1--37:15. Schloss
  Dagstuhl - Leibniz-Zentrum f{\"{u}}r Informatik (2020).
  \doi{10.4230/LIPICS.FSTTCS.2020.37},
  \url{https://doi.org/10.4230/LIPIcs.FSTTCS.2020.37}

\bibitem{Complexity}
Balasubramanian, A.R., Esparza, J., Lazi{\'{c}}, M.: Complexity of verification
  and synthesis of threshold automata. In: Hung, D.V., Sokolsky, O. (eds.)
  Automated Technology for Verification and Analysis - 18th International
  Symposium, {ATVA} 2020, Hanoi, Vietnam, October 19-23, 2020, Proceedings.
  Lecture Notes in Computer Science, vol. 12302, pp. 144--160. Springer (2020).
  \doi{10.1007/978-3-030-59152-6\_8},
  \url{https://doi.org/10.1007/978-3-030-59152-6\_8}

\bibitem{BKLW19}
Bertrand, N., Konnov, I., Lazi{\'{c}}, M., Widder, J.: Verification of
  randomized consensus algorithms under round-rigid adversaries. Int. J. Softw.
  Tools Technol. Transf.  \textbf{23}(5),  797--821 (2021).
  \doi{10.1007/S10009-020-00603-X},
  \url{https://doi.org/10.1007/s10009-020-00603-x}

\bibitem{AdHoc}
Delzanno, G., Sangnier, A., Zavattaro, G.: Parameterized verification of ad hoc
  networks. In: Gastin, P., Laroussinie, F. (eds.) {CONCUR} 2010 - Concurrency
  Theory, 21th International Conference, {CONCUR} 2010, Paris, France, August
  31-September 3, 2010. Proceedings. Lecture Notes in Computer Science,
  vol.~6269, pp. 313--327. Springer (2010).
  \doi{10.1007/978-3-642-15375-4\_22},
  \url{https://doi.org/10.1007/978-3-642-15375-4\_22}

\bibitem{GermanS92}
German, S.M., Sistla, A.P.: Reasoning about systems with many processes. J.
  {ACM}  \textbf{39}(3),  675--735 (1992). \doi{10.1145/146637.146681},
  \url{https://doi.org/10.1145/146637.146681}

\bibitem{KW18}
Konnov, I., Widder, J.: {ByMC}: {B}yzantine {M}odel {C}hecker. In: Margaria,
  T., Steffen, B. (eds.) Leveraging Applications of Formal Methods,
  Verification and Validation. Distributed Systems - 8th International
  Symposium, ISoLA 2018, Limassol, Cyprus, November 5-9, 2018, Proceedings,
  Part {III}. Lecture Notes in Computer Science, vol. 11246, pp. 327--342.
  Springer (2018). \doi{10.1007/978-3-030-03424-5\_22},
  \url{https://doi.org/10.1007/978-3-030-03424-5\_22}

\bibitem{ELTLFT}
Konnov, I.V., Lazi{\'{c}}, M., Veith, H., Widder, J.: A short counterexample
  property for safety and liveness verification of fault-tolerant distributed
  algorithms. In: Castagna, G., Gordon, A.D. (eds.) Proceedings of the 44th
  {ACM} {SIGPLAN} Symposium on Principles of Programming Languages, {POPL}
  2017, Paris, France, January 18-20, 2017. pp. 719--734. {ACM} (2017).
  \doi{10.1145/3009837.3009860}, \url{https://doi.org/10.1145/3009837.3009860}

\bibitem{KVW17:IandC}
Konnov, I.V., Veith, H., Widder, J.: On the completeness of bounded model
  checking for threshold-based distributed algorithms: Reachability. Inf.
  Comput.  \textbf{252},  95--109 (2017). \doi{10.1016/J.IC.2016.03.006},
  \url{https://doi.org/10.1016/j.ic.2016.03.006}

\bibitem{AllFlavors}
Kukovec, J., Konnov, I., Widder, J.: Reachability in parameterized systems: All
  flavors of threshold automata. In: Schewe, S., Zhang, L. (eds.) 29th
  International Conference on Concurrency Theory, {CONCUR} 2018, September 4-7,
  2018, Beijing, China. LIPIcs, vol.~118, pp. 19:1--19:17. Schloss Dagstuhl -
  Leibniz-Zentrum f{\"{u}}r Informatik (2018).
  \doi{10.4230/LIPICS.CONCUR.2018.19},
  \url{https://doi.org/10.4230/LIPIcs.CONCUR.2018.19}

\bibitem{LKWB17:opodis}
Lazi{\'{c}}, M., Konnov, I., Widder, J., Bloem, R.: Synthesis of distributed
  algorithms with parameterized threshold guards. In: Aspnes, J., Bessani, A.,
  Felber, P., Leit{\~{a}}o, J. (eds.) 21st International Conference on
  Principles of Distributed Systems, {OPODIS} 2017, Lisbon, Portugal, December
  18-20, 2017. LIPIcs, vol.~95, pp. 32:1--32:20. Schloss Dagstuhl -
  Leibniz-Zentrum f{\"{u}}r Informatik (2017).
  \doi{10.4230/LIPICS.OPODIS.2017.32},
  \url{https://doi.org/10.4230/LIPIcs.OPODIS.2017.32}

\bibitem{SynthesisOneCounter}
P{\'{e}}rez, G.A., Raha, R.: Revisiting parameter synthesis for one-counter
  automata. In: Manea, F., Simpson, A. (eds.) 30th {EACSL} Annual Conference on
  Computer Science Logic, {CSL} 2022, February 14-19, 2022, G{\"{o}}ttingen,
  Germany (Virtual Conference). LIPIcs, vol.~216, pp. 33:1--33:18. Schloss
  Dagstuhl - Leibniz-Zentrum f{\"{u}}r Informatik (2022).
  \doi{10.4230/LIPICS.CSL.2022.33},
  \url{https://doi.org/10.4230/LIPIcs.CSL.2022.33}

\bibitem{ST87:abc}
Srikanth, T.K., Toueg, S.: Simulating authenticated broadcasts to derive simple
  fault-tolerant algorithms. Distributed Comput.  \textbf{2}(2),  80--94
  (1987). \doi{10.1007/BF01667080}, \url{https://doi.org/10.1007/BF01667080}

\bibitem{CADE2005}
Verma, K.N., Seidl, H., Schwentick, T.: On the complexity of equational horn
  clauses. In: Nieuwenhuis, R. (ed.) Automated Deduction - CADE-20, 20th
  International Conference on Automated Deduction, Tallinn, Estonia, July
  22-27, 2005, Proceedings. Lecture Notes in Computer Science, vol.~3632, pp.
  337--352. Springer (2005). \doi{10.1007/11532231\_25},
  \url{https://doi.org/10.1007/11532231\_25}

\bibitem{Wrathall76}
Wrathall, C.: Complete sets and the polynomial-time hierarchy. Theor. Comput.
  Sci.  \textbf{3}(1),  23--33 (1976). \doi{10.1016/0304-3975(76)90062-1},
  \url{https://doi.org/10.1016/0304-3975(76)90062-1}

\end{thebibliography}

%
%
%

%
\end{document}